\documentclass[aps,pra,superscriptaddress,twocolumn,footinbib,showpacs,10pt]{revtex4-1}
\usepackage{fix-cm}
\usepackage[dvipsnames]{xcolor}
\usepackage{xparse}

\usepackage[noamssymbols]{newpxmath} 

\usepackage{amsthm,amsmath,amssymb,amsbsy}

\usepackage{color}
\definecolor{darkred}{rgb}{0.8,0.1,0.1}
\definecolor{lightblue}{rgb}{0.1,0.1,0.8}
\usepackage{ulem}
\usepackage{graphicx,epic,eepic,epsfig,verbatim,color}
 \usepackage{tikz}
\usepackage[T1]{fontenc}
\usepackage[utf8]{inputenc}
\usepackage[breaklinks]{hyperref}
\usepackage{color, soul}
\usepackage{xcolor}
\usepackage[most,breakable]{tcolorbox}

\definecolor{myhlcolor}{rgb}{1, 1, 0}
\definecolor{myhlcolortwo}{rgb}{1, 1, 0}
\definecolor{myhlcolor}{rgb}{1, 1, 1}
\newtheorem{theorem}{Theorem}

\newtheorem{corollary}[theorem]{Corollary}
\newtheorem{definition}[theorem]{Definition}

\newtheorem{proposition}[theorem]{Proposition}


\usepackage{cleveref}
\usepackage{graphicx}
\usepackage{xcolor}

\definecolor{darkblue}{RGB}{0,76,156}
\definecolor{darkkblue}{RGB}{0,0,153}
\definecolor{blue2}{RGB}{102,178,255}

\def\endenv{\ifmmode\;\else{\unskip\nobreak\hfil
\penalty50\hskip1em\null\nobreak\hfil\;
\parfillskip=0pt\finalhyphendemerits=0\endgraf}\fi}

\newenvironment{remark}{\noindent \textbf{{Remark~}}}{}

\mathchardef\ordinarycolon\mathcode`\:
\mathcode`\:=\string"8000
\def\vcentcolon{\mathrel{\mathop\ordinarycolon}}
\begingroup \catcode`\:=\active
  \lowercase{\endgroup
  \let :\vcentcolon
  }

\makeatletter
\def\resetMathstrut@{%
  \setbox\z@\hbox{%
    \mathchardef\@tempa\mathcode`\[\relax
    \def\@tempb##1"##2##3{\the\textfont"##3\char"}%
    \expandafter\@tempb\meaning\@tempa \relax
  }%
  \ht\Mathstrutbox@\ht\z@ \dp\Mathstrutbox@\dp\z@}
\makeatother
\begingroup
  \catcode`(\active \xdef({\left\string(}
  \catcode`)\active \xdef){\right\string)}
\endgroup
\mathcode`(="8000 \mathcode`)="8000

\newcommand{\nc}{\newcommand}
\nc{\rnc}{\renewcommand}
\nc{\beg}{\begin{equation}}
\nc{\eeq}{{\end{equation}}}
\nc{\beqa}{\begin{eqnarray}}
\nc{\eeqa}{\end{eqnarray}}
\nc{\lbar}[1]{\overline{#1}}
\nc{\ketbra}[2]{|#1\rangle\!\langle#2|}

\nc{\avg}[1]{\langle#1\rangle}
\nc{\Rank}{\operatorname{Rank}}
\nc{\smfrac}[2]{\mbox{$\frac{#1}{#2}$}}
\nc{\ox}{\otimes}
\nc{\dg}{\dagger}
\nc{\dn}{\downarrow}
\nc{\cA}{{\cal A}}
\nc{\cB}{{\cal B}}
\nc{\cC}{{\cal C}}
\nc{\cD}{{\cal D}}
\nc{\cE}{{\cal E}}
\nc{\cF}{{\cal F}}
\nc{\cG}{{\cal G}}
\nc{\cH}{{\cal H}}
\nc{\cI}{{\cal I}}
\nc{\cJ}{{\cal J}}
\nc{\cK}{{\cal K}}
\nc{\cL}{{\cal L}}
\nc{\cM}{{\cal M}}
\nc{\cN}{{\cal N}}
\nc{\cO}{{\cal O}}
\nc{\cP}{{\cal P}}
\nc{\cQ}{{\cal Q}}
\nc{\cR}{{\cal R}}
\nc{\cS}{{\cal S}}
\nc{\cT}{{\cal T}}
\nc{\cV}{{\cal V}}
\nc{\cU}{{\cal U}}
\nc{\cX}{{\cal X}}
\nc{\cY}{{\cal Y}}
\nc{\cZ}{{\cal Z}}
\nc{\cW}{{\cal W}}
\nc{\csupp}{{\operatorname{csupp}}}
\nc{\qsupp}{{\operatorname{qsupp}}}
\nc{\var}{{\operatorname{var}}}
\nc{\rar}{\rightarrow}
\nc{\lrar}{\longrightarrow}
\nc{\polylog}{{\operatorname{polylog}}}
\nc{\1}{{\mathds{1}}}
\nc{\wt}{{\operatorname{wt}}}
\nc{\av}[1]{{\left\langle {#1} \right\rangle}}
\nc{\supp}{{\operatorname{supp}}}

\def\x{\xi}

\nc{\RR}{{{\mathbb R}}}
\nc{\CC}{{{\mathbb C}}}
\nc{\FF}{{{\mathbb F}}}
\nc{\NN}{{{\mathbb N}}}
\nc{\ZZ}{{{\mathbb Z}}}
\nc{\PP}{{{\mathbb P}}}
\nc{\QQ}{{{\mathbb Q}}}
\nc{\UU}{{{\mathbb U}}}
\nc{\EE}{{{\mathbb E}}}

\nc{\CHSH}{{\operatorname{CHSH}}}

\nc{\be}{\begin{equation}}
\nc{\ee}{{\end{equation}}}
\nc{\bea}{\begin{eqnarray}}
\nc{\eea}{\end{eqnarray}}
\nc{\Hom}[2]{\mbox{Hom}(\CC^{#1},\CC^{#2})}
\nc{\rU}{\mbox{U}}

\nc{\ob}[1]{#1}

\nc{\SEP}{{\text{SEP}}}
\nc{\NS}{{\text{NS}}}
\nc{\LOCC}{{\text{LOCC}}}
\nc{\PPT}{{\text{PPT}}}
\nc{\EXT}{{\text{EXT}}}
\nc{\Sym}{{\operatorname{Sym}}}


\nc{\ERLO}{{E_{\text{r,LO}}}}
\nc{\ERLOCC}{{E_{\text{r,LOCC}}}}
\nc{\ERPPT}{{E_{\text{r,PPT}}}}
\nc{\ERLOCCinfty}{{E^{\infty}_{\text{r,LOCC}}}}
\nc{\Aram}{{\operatorname{\sf A}}}

\usepackage{tikz}
\usetikzlibrary{arrows}

\makeatletter
\def\grd@save@target#1{%
  \def\grd@target{#1}}
\def\grd@save@start#1{%
  \def\grd@start{#1}}
\tikzset{
  grid with coordinates/.style={
    to path={%
      \pgfextra{%
        \edef\grd@@target{(\tikztotarget)}%
        \tikz@scan@one@point\grd@save@target\grd@@target\relax
        \edef\grd@@start{(\tikztostart)}%
        \tikz@scan@one@point\grd@save@start\grd@@start\relax
        \draw[minor help lines,magenta] (\tikztostart) grid (\tikztotarget);
        \draw[major help lines] (\tikztostart) grid (\tikztotarget);
        \grd@start
        \pgfmathsetmacro{\grd@xa}{\the\pgf@x/1cm}
        \pgfmathsetmacro{\grd@ya}{\the\pgf@y/1cm}
        \grd@target
        \pgfmathsetmacro{\grd@xb}{\the\pgf@x/1cm}
        \pgfmathsetmacro{\grd@yb}{\the\pgf@y/1cm}
        \pgfmathsetmacro{\grd@xc}{\grd@xa + \pgfkeysvalueof{/tikz/grid with coordinates/major step}}
        \pgfmathsetmacro{\grd@yc}{\grd@ya + \pgfkeysvalueof{/tikz/grid with coordinates/major step}}
        \foreach \x in {\grd@xa,\grd@xc,...,\grd@xb}
        \node[anchor=north] at (\x,\grd@ya) {\pgfmathprintnumber{\x}};
        \foreach \y in {\grd@ya,\grd@yc,...,\grd@yb}
        \node[anchor=east] at (\grd@xa,\y) {\pgfmathprintnumber{\y}};
      }
    }
  },
  minor help lines/.style={
    help lines,
    step=\pgfkeysvalueof{/tikz/grid with coordinates/minor step}
  },
  major help lines/.style={
    help lines,
    line width=\pgfkeysvalueof{/tikz/grid with coordinates/major line width},
    step=\pgfkeysvalueof{/tikz/grid with coordinates/major step}
  },
  grid with coordinates/.cd,
  minor step/.initial=.2,
  major step/.initial=1,
  major line width/.initial=2pt,
}
\makeatother

\tikzset{
  treenode/.style = {align=center, inner sep=0pt, text centered,
    font=\sffamily},
  arn_n/.style = {treenode, circle, white, font=\sffamily\bfseries, draw=black,
    fill=black, text width=1.5em},
  arn_r/.style = {treenode, circle, red, draw=red, 
    text width=1.5em, very thick},
  arn_x/.style = {treenode, rectangle, draw=black,
    minimum width=0.5em, minimum height=0.5em}
}

\usepackage{pifont}
%
%



\usepackage{graphicx}
\usepackage{bm,bbm}%
\usepackage{braket}
\usepackage{enumerate}
\usepackage{subcaption,caption}
\usepackage{booktabs,multirow}
\usepackage{mathtools,bbding}
\usepackage[percent]{overpic}
\usepackage{microtype}

\usepackage{newpxtext,newpxmath}

\usepackage[most,breakable]{tcolorbox}

\AtBeginDocument{
\heavyrulewidth=.08em
\lightrulewidth=.05em
\cmidrulewidth=.03em
\belowrulesep=.65ex
\belowbottomsep=0pt
\aboverulesep=.4ex
\abovetopsep=0pt
\cmidrulesep=\doublerulesep
\cmidrulekern=.5em
\defaultaddspace=.5em
}

\usepackage{etoolbox}
\hypersetup{linktoc=all}

\DeclareMathOperator{\Tr}{Tr}

\addtolength{\textheight}{3mm}

\nc{\MIO}{{\text{\rm MIO}}}
\nc{\DIO}{{\text{\rm DIO}}}
\nc{\SIO}{{\text{\rm SIO}}}
\nc{\IO}{{\text{\rm IO}}}

\let\oldproofname\proofname
\renewcommand{\proofname}{\rm\bf{\oldproofname}}

\makeatletter
\renewenvironment{proof}[1][\proofname]{%
  \vspace{-\topsep}
  \pushQED{\qed}
  \normalfont
  \topsep6\p@\@plus6\p@\relax
  \trivlist\item[\hskip\labelsep\bfseries#1\@addpunct{.}]\ignorespaces}{\popQED\endtrivlist\@endpefalse}
\makeatother

\captionsetup{compatibility=false}
\begin{document}
\title{Family of two-parameter multipartite entanglement measures}
\author{Yu Luo}
\email{penroseluoyu@gmail.com}
\affiliation{School of Artificial Intelligence and Computer Science, Shaanxi Normal University, Xi'an, 710062, China}
\author{Zhihua Guo}
\affiliation{School of Mathematics and Statistics, Shaanxi Normal University, Xi’an 710062, China}
\author{Fanxu Meng}
\affiliation{College of Artificial Intelligence, Nanjing Tech University, Nanjing, 211800, China}
\author{Chen-Ming Bai}
\email{baichm@stdu.edu.cn }
\affiliation{Department of Mathematics and Physics, Shijiazhuang Tiedao University,
Shijiazhuang, 050043, China}
%
\begin{abstract}
Multipartite entanglement is regarded as a crucial physical resource in quantum network communication. However, due to the intrinsic complexity of quantum many-body systems, identifying a multipartite entanglement measure that is both efficiently computable and capable of accurately characterizing entanglement remains a challenging problem. 
To address these issues, we propose a family of two-parameter multipartite entanglement measures for mixed states, termed unified-entropy concentratable entanglements. Many well-known multipartite entanglement measures are recovered as special cases of this family of measures, such as the entanglement of formation and the concentratable entanglements introduced by Beckey $et$ $al$. [Phys. Rev. Lett. 127, 140501 (2021)].
We demonstrate that the unified-entropy concentratable entanglements constitute well-defined entanglement monotones and establish several desirable properties they satisfy, such as subadditivity and continuity. 
We further investigate the ordering relations of unified-entropy concentratable entanglements and discuss how these quantities can be efficiently estimated on near-term quantum devices.
As an application, we demonstrate that the unified-entropy concentratable entanglements can effectively distinguish between multi-qubit Greenberger–Horne–Zeilinger states and W states. The ordering relations of these entanglement measures are further validated using four-partite star quantum network states and four-qubit Dicke states. Moreover, we find that the unified-entropy concentratable entanglements exhibit greater sensitivity than the original concentratable entanglements in detecting certain four-partite star quantum network states.
\end{abstract}
\date{\today}
\pacs{03.67.a, 03.65.Ud}
\maketitle
\section{Introduction} 
Entanglement is a crucial resource in quantum information processing, playing a vital role in quantum network communication~\cite{Buscemi2012-PhysRevLett.108.200401,Navascues2020-PhysRevLett.125.240505,Horodecki2009}, quantum teleportation~\cite{Bennett1993-PhysRevLett.70.1895,Boschi1998-PhysRevLett.80.1121}, and quantum dense coding~\cite{Bennett1992-PhysRevLett.69.2881,Mattle1996-PhysRevLett.76.4656}. With advances in quantum control technologies, the utility of entanglement has extended to quantum-enhanced sensing~\cite{chalopin2018quantum,Pooser2020_PhysRevLett.124.230504,defienne2024advances,pezze2021entanglement}, quantum communication~\cite{liao2017satellite,ma2012quantum,Valivarthi2020_PRXQuantum.1.020317,hu2023progress}, and demonstrating quantum advantage on near-term quantum devices~\cite{arute2019quantum,king2025beyond}. However, the practical effectiveness of these applications often depends on the degree of entanglement present in the quantum state. 

Bipartite entanglement has been extensively studied and is relatively well understood. In contrast, multipartite entanglement remains less well understood, largely because its structure becomes increasingly complex as the number of subsystems increases. A well-known example illustrating this complexity is the case of three-qubit entangled states, where two inequivalent classes of genuine multipartite entanglement exist: the Greenberger–Horne–Zeilinger (GHZ) states and the W states. These two classes cannot be transformed into each other under stochastic local operations and classical communication~\cite{Dur2000-PhysRevA.62.062314}. Consequently, developing a unified framework for the detection, quantification, and characterization of multipartite entanglement remains a challenging problem~\cite{Meyer2002,Carvalho2004-PhysRevLett.93.230501,Schwaiger2015-PhysRevLett.115.150502,cianciaruso2016accessible,vrana2023family,hamilton2024probing,foulds2021controlled,beckey2021computable,Bai2025,Liu2025-,Mukherjee2025efficient}.
To address the above challenges, the authors of Ref.~\cite{beckey2021computable} introduced a multipartite entanglement measure known as \textit{concentratable entanglements}. This is a family of entanglement measures that depends on the choice of subsystems in a multipartite system. Many well-known entanglement measures, including concurrence~\cite{Wootters1998_PhysRevLett.80.2245}, can be regarded as special cases of concentratable entanglements. Moreover, this measure can be efficiently estimated on near-term quantum devices~\cite{beckey2021computable,Beckey2023_PhysRevA.107.062425}. \textit{A natural question that arises is whether this unified approach to quantifying and characterizing different types of entanglement can be extended to other entanglement measures that are not special cases of concentratable entanglements.} In other words, is it possible to incorporate important bipartite entanglement measures—such as entanglement of formation and R\'{e}nyi entanglement—into a broader multipartite entanglement framework? Such an extension would provide valuable insights into the relationship between bipartite and multipartite entanglement.

In this work we provide an affirmative answer to the above question. Specifically, we construct a family of two-parameter multipartite entanglement measures for mixed states, denoted by $E^{(s)}_{\alpha,\beta}$, based on the unified entropy $S_{\alpha,\beta}$ (with the precise meaning of the parameters defined in Sec.~\ref{sec:Preliminary}). We refer to these measures as \textit{unified-entropy concentratable entanglements}. We prove that $E^{(s)}_{\alpha,\beta}$ not only non-increasing under local operations and classical communication (LOCC), but also satisfies several desirable properties, including subadditivity and continuity. Moreover, the unified-entropy concentratable entanglements establish a unifying framework that, through suitable choices of the parameters $\alpha$ and $\beta$, recovers several well-known bipartite and multipartite entanglement measures, including R\'{e}nyi entanglement~\cite{horodecki1996necessary,Wang2016-PhysRevA.93.022324,brydges2019probing-science}, Tsallis entanglement~\cite{landsberg1998distributions,Luo2016}, squared concurrence~\cite{Wootters1998_PhysRevLett.80.2245}, entanglement of formation~\cite{Bennett1996-PRA.54.3824}, and even the original concentratable entanglements~\cite{beckey2021computable,Beckey2023_PhysRevA.107.062425}. This unified framework enables the detection, quantification, and characterization of a broad class of entanglement measures within multipartite quantum systems. In particular, we employ the unified-entropy concentratable entanglements to establish a sufficient criterion for certifying genuine multipartite entanglement in three-qudit systems. 

We further investigate the ordering relations of the unified-entropy concentratable entanglements $E^{(s)}_{\alpha,\beta}$. Specifically, when $\beta \geq 1$, $E^{(s)}_{\alpha,\beta}$ decreases monotonically with increasing $\alpha$. When the unified-entropy measures reduce to what we refer to as the R\'{e}nyi and Tsallis concentratable entanglements, we also observe the same inverse dependence on the parameter. In particular, we establish a strict ordering between the von Neumann concentratable entanglements, as defined in this work, and the original concentratable entanglements. These ordering relations provide a useful foundation for efficiently estimating unified-entropy concentratable entanglements on near-term quantum devices. Moreover, by evaluating these quantities for four-partite star quantum network states and four-qubit Dicke states, we not only confirm the predicted hierarchy but also find that the unified-entropy concentratable entanglements exhibit greater sensitivity than the original concentratable entanglements in detecting certain four-partite star quantum network states. This highlights the advantage of parametrized multipartite entanglement measures. Finally, we demonstrate that the unified-entropy concentratable entanglements can effectively distinguish between multiqubit GHZ states and W states.

This paper is organized as follows. 
In Sec.~\ref{sec:Preliminary} we introduced the necessary notation and definitions. 
In Sec.~\ref{sec:mainresults} we provide our main results. 
In Sec.~\ref{sec:lowerbound} we investigate the ordering relations of unified-entropy concentratable entanglements and discuss how these quantities can be efficiently estimated on near-term quantum computers.
In Sec.~\ref{sec:examples}, we investigate the ability of $E^{(s)}_{\alpha,\beta}$ to distinguish between GHZ and W states and study the example of a four-partite star quantum network and four-qubits Dicke states.
We summarize our results in Sec.~\ref{sec:conclusion}.

\section{Preliminary Information} \label{sec:Preliminary}
\subsection{Notation}
We first introduce the necessary notations. We consider a Hilbert space $\mathcal{H}$ of finite dimension $d$. Throughout this paper, all Hilbert spaces considered are finite dimensional. The density operators are denoted by lowercase Greek letters such as \(\rho\), \(\sigma\), \(\psi\), \(\phi\) and so on. 
Conventionally, we use $\psi$ and $\phi$ to represent quantum pure states, i.e., $\psi := \ketbra{\psi}{\psi}$ and $\phi := \ketbra{\phi}{\phi}$, and so on. The support set $\supp\rho$ of a density operator $\rho$ is the vector space spanned by eigenvectors of the operator with non-zero eigenvalues. 
We use \(A_0, A_1\), etc., to represent systems and their corresponding Hilbert spaces. The collection of all bounded operators on system $\mathcal{H}$ will be denoted by $\mathcal{B}(\mathcal{H})$. A multipartite separable operation is a completely positive and trace-preserving map
$\Lambda: \mathcal{B}(\mathcal{H}) \longrightarrow \mathcal{B}(\mathcal{H}')$ that admits a Kraus representation of the fully product form~\cite{watrous2018}
\begin{equation}\label{eq:MSEP}
\Lambda(\cdot) = \sum_{k}
M_{k} 
(\cdot)
M_{k}^{\dagger},    
\end{equation}
where 
\begin{equation}
M_k = 
M_{k1} \otimes M_{k2} \otimes \cdots \otimes M_{kn},    
\end{equation}
and each local operator
$M_{ki} : \mathcal{B}(\mathcal{H}_{A_i}) \to \mathcal{B}(\mathcal{H}'_{A_i})$ acts only on subsystem $A_i$.
The Kraus operators satisfy the trace-preserving condition
\begin{equation}
\sum_{k}
M_{k1}^\dagger M_{k1}
\otimes
M_{k2}^\dagger M_{k2}
\otimes
\cdots
\otimes
M_{kn}^\dagger M_{kn}
= I.    
\end{equation}

\subsection{Unified entropy and unified-entropy concentrarable entanglement}
\begin{definition}
    The unified entropy for a quantum state \(\rho\) is defined as~\cite{hu2006generalized,Rastegin2011some,Rastegin2013unified}
    \begin{equation}
        S_{\alpha,\beta}(\rho)=\frac{1}{(1-\alpha)\beta}[(\Tr\rho^\alpha)^{\beta}-1],
    \end{equation}
    where \(\alpha>0,\alpha\neq1\) and \(\beta>0\).
\end{definition}


The unified entropy reduces to several well-known entropy measures in specific parameter regimes:

It tends to the R\'{e}nyi entropy in the limit \(\beta \to 0\)~\cite{renyi1961measures}:
\begin{equation}
    R_{\alpha}(\rho) := \lim_{\beta \to 0} S_{\alpha,\beta}(\rho) = \frac{1}{1 - \alpha} \log_2 \Tr\rho^{\alpha};
\end{equation}

for \(\beta = 1\) it converges to the Tsallis entropy~\cite{tsallis1988possible}
\begin{equation}
    T_{\alpha}(\rho) := S_{\alpha,1}(\rho) = \frac{1}{1 - \alpha} \left( \Tr\rho^{\alpha} - 1 \right);
\end{equation}

for \(\alpha = 2\) and \(\beta = 1\) it reduces to the linear entropy~\cite{Wootters1998_PhysRevLett.80.2245,Osborne2006-PhysRevLett.96.220503,Uhlmann2000-PhysRevA.62.032307}
\begin{equation}
    S_{\text{lin}}(\rho) := S_{2,1}(\rho) = 1 - \Tr\rho^2;
\end{equation}

and in the limit \(\alpha \to 1\) it converges to the von Neumann entropy~\cite{von2018mathematical}
\begin{equation}
    S(\rho) := \lim_{\alpha \to 1} S_{\alpha,\beta}(\rho) = -\Tr(\rho \log_2 \rho).
\end{equation}

Since \( \alpha \to 1 \) and \( \beta \to 0 \) represent two limiting cases, for the sake of convenience, we treat \( \alpha = 1 \) and \( \beta = 0 \) as these limits throughout the remainder of this paper, unless specified otherwise.

It is straightforward to verify that the unified entropy is nonnegative, \(S_{\alpha,\beta}(\rho)\geq0\), and remains invariant under any unitary Th'r, \(S_{\alpha,\beta}(\mathcal{U}(\rho))=S_{\alpha,\beta}(\rho)\). Additionally, it has been proved that unified entropy is concave for \(  0<\alpha\leq1, \alpha\beta\leq1   \) or \( \alpha\geq1, \alpha\beta\geq1 \)~\cite{hu2006generalized} or \(0<\alpha<1, 0\leq\beta\leq1\)~\cite{Rastegin2011some}: \(\sum_kp_kS_{\alpha,\beta}(\rho_k)\leq S_{\alpha,\beta}(\rho)\) where \( \rho=\sum_kp_k\rho_k \). A direct corollary of this property is the Tsallis entropy for \(\alpha>0\), linear entropy and von Neumann entropy are concave functions. 

We next introduce a family of two-parameter multipartite entanglement measure, referred to as \textit{the unified-entropy concentratable entanglements}.
\begin{definition}[unified-entropy concentratable entanglements]
For any $n$-qudit pure state $\ket{\psi}$ and a subset $s \subseteq [n]$ of qudits, the unified-entropy concentratable entanglements for pure states is defined as
\begin{equation}
E^{(s)}_{\alpha,\beta}(|\psi\rangle) = \frac{1}{2^{|s|}} \sum_{\chi \in \mathcal{P}(s)}S_{\alpha,\beta}(\psi_{\chi}),
\end{equation}
where $\psi_{\chi}$ is the joint reduced state of the subsystems corresponding to the elements in $\chi$, and $\ket{\psi}$ is the associated state. Meanwhile, \( S(\psi_{\chi}) = -\Tr(\psi_{\chi} \log_2 \psi_{\chi}) \) denotes the von Neumann entropy of \( \psi_{\chi} \). When $\chi = \emptyset$, we set $S_{\alpha,\beta}(\psi_{\chi})=0$.

For any $n$-qudit mixed state \(\rho\), the unified-entropy concentratable entanglements is defined as
\begin{equation}
E^{(s)}_{\alpha,\beta}(\rho) =\min\sum_ip_iE^{(s)}_{\alpha,\beta}(\ket{\psi_i}) ,
\end{equation}
where the minimum is taken over all the pure state decomposition ${\{p_i,\psi_i\}}$ of $\rho$.
\end{definition}

From its definition, the unified-entropy concentratable entanglements quantifies the average entanglement between all subsystems of size \(s\) and the remainder of the quantum system. In the special case where \(s\) contains only a single element, the measure reduces to the unified-entropy entanglement introduced in Ref.~\cite{san2011unified}. Specifically, one has  
\[
E_{\alpha,\beta}^{(\{j\})}(\rho) = \min \sum_i p_i S_{\alpha,\beta}(\psi_i^j),
\]  
where \(\rho = \sum_i p_i \psi_i\), and \(\psi_i^j = \Tr_{\{\overline{j}\}}(\psi_i)\) is the reduced state of \(\ket{\psi_i}\) on subsystem \(j\).

Since the unified entropy recovers several well-known entropy measures in specific limits—namely, the R\'{e}nyi entropy when \(\beta = 0\), the Tsallis entropy when \(\beta = 1\), the linear entropy when \((\alpha = 2, \beta = 1)\), and the von Neumann entropy when \(\alpha = 1\)—the corresponding unified-entropy entanglement similarly reduces to established entanglement measures. These include the R\'{e}nyi entanglement~\cite{horodecki1996necessary,Wang2016-PhysRevA.93.022324,brydges2019probing-science}, Tsallis entanglement~\cite{landsberg1998distributions,Luo2016}, squared concurrence~\cite{Wootters1998_PhysRevLett.80.2245}, and the entanglement of formation~\cite{Bennett1996-PRA.54.3824}.

Furthermore, for a general subset \(s\), the unified-entropy concentratable entanglements reduces to the recently introduced multipartite entanglement measure known as \textit{concentratable entanglements} \(C^{(s)}\)~\cite{beckey2021computable} when evaluated at \((\alpha = 2, \beta = 1)\), corresponding to the case of linear entropy. In addition, we define the unified-entropy concentratable entanglements, when reduced to the R\'{e}nyi, Tsallis, and von Neumann entropies, as the \textit{R\'{e}nyi concentratable entanglements, Tsallis concentratable entanglements, and von Neumann concentratable entanglements}, denoted by \(\mathcal{R}_{\alpha}^{(s)}\), \(\mathcal{T}^{(s)}_{\alpha}\), and \(E^{(s)}\), respectively.

\section{Properties of unified-entropy concentratable entanglements}\label{sec:mainresults}
In this section, we show that the unified-entropy concentratable entanglements \( E^{(s)}_{\alpha,\beta} \) is non-negative and monotonic under LOCC operations. Moreover, it is non-increasing on average under LOCC for the parameters range \(\mathcal{A}=\{(\alpha,\beta): 0 < \alpha \leq 1, \alpha\beta \leq 1\}\cup\{(\alpha,\beta): \alpha \geq 1, \alpha\beta \geq 1\}\cup\{ (\alpha,\beta): 0 < \alpha < 1, 0 \leq \beta \leq 1\}\). 
In addition, we present several fundamental properties of \( E^{(s)}_{\alpha,\beta} \), including subadditivity and continuity. 
To address the issue of monotonicity under LOCC, we consider a slightly broader class of operations, namely multipartite separable operations. Our main result is stated as follows.

\begin{theorem}\label{the:vNCE}
Let $\rho$ be an arbitrary $n$-qudit state. For any $\alpha > 0$ and $\beta \geq 0$, subsets $s\subset [n]$, the unified-entropy concentratable entanglements $E^{(s)}_{\alpha,\beta}(\rho)$ is a multipartite entanglement monotone under the multipartite separable operations,  
i.e., $E^{(s)}_{\alpha,\beta}(\rho)$ satisfies following properties

(i)~If $\rho=\sum_ip_i\bigotimes_{j=1}^n\psi_{ij}$ is fully separable, $E^{(s)}_{\alpha,\beta}(\rho)=0$; otherwise $E^{(s)}_{\alpha,\beta}(\rho)>0$.

(ii)~$E^{(s)}_{\alpha,\beta}(\rho)$ is nonincreasing under the multipartite separable operations.

Furthermore, for any \((\alpha,\beta)\in\mathcal{A}\), the unified-entropy concentratable entanglements $E^{(s)}_{\alpha,\beta}(\rho)$ is a multipartite entanglement measure under the multipartite separable operations, 
i.e., $E^{(s)}_{\alpha,\beta}(\rho)$ satisfies following property:

(iii)~$E^{(s)}_{\alpha,\beta}(\rho)$ is non-increasing, on average, under multipartite separable operations. 
\end{theorem}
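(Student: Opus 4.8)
The plan is to reduce every claim to a statement about pure states and then transport it to mixed states through the convex roof. For part~(i), the starting point is that $S_{\alpha,\beta}(\sigma)=0$ if and only if $\sigma$ is pure: this is immediate from $S_{\alpha,\beta}\geq 0$ together with the fact that $\Tr\sigma^{\alpha}=1$ forces $\sigma$ to be rank one when $\alpha\neq 1$, and from the von~Neumann limit when $\alpha=1$. Consequently, for a pure state $\ket{\psi}$ one has $E^{(s)}_{\alpha,\beta}(\ket{\psi})=0$ exactly when every $\psi_\chi$ with $\emptyset\neq\chi\in\mathcal{P}(s)$ is pure; taking the singletons $\chi=\{j\}$ with $j\in s$ this forces $\ket{\psi}=\bigl(\bigotimes_{j\in s}\ket{\phi_j}\bigr)\otimes\ket{\phi_{\overline{s}}}$, which is in particular fully separable when $\rho$ is. Conversely, a fully separable $\rho=\sum_i p_i\bigotimes_{j=1}^n\psi_{ij}$ is already a decomposition into pure product states, each of which has all reductions pure, so the minimum defining $E^{(s)}_{\alpha,\beta}(\rho)$ is $0$; and if that minimum is $0$, an optimal decomposition consists of pure states of the above product shape, exhibiting the corresponding separability and giving the ``$>0$ otherwise'' direction.

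The technical core, which I would establish next, is a pure-state averaging inequality under a single-party local operation. Let $\ket{\psi}$ be an $n$-qudit pure state and let party $j$ apply a local instrument $\{K_m\}_m$ with $\sum_m K_m^{\dagger}K_m=I$, producing the ensemble $\{q_m,\ket{\psi_m}\}$ with $\ket{\psi_m}=(K_m)_j\ket{\psi}/\sqrt{q_m}$. Fix $\chi\in\mathcal{P}(s)$ and consider the bipartite cut $\chi\,|\,\overline{\chi}$. If $j\in\overline{\chi}$, cyclicity of the partial trace over $\overline{\chi}$ gives $\sum_m q_m(\psi_m)_\chi=\Tr_{\overline{\chi}}\!\bigl[(\sum_m (K_m^{\dagger}K_m)_j)\,\psi\bigr]=\psi_\chi$, so $\{q_m,(\psi_m)_\chi\}$ is a decomposition of $\psi_\chi$ and concavity of $S_{\alpha,\beta}$ yields $\sum_m q_m S_{\alpha,\beta}((\psi_m)_\chi)\leq S_{\alpha,\beta}(\psi_\chi)$. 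If instead $j\in\chi$, the same computation gives $\sum_m q_m(\psi_m)_{\overline{\chi}}=\psi_{\overline{\chi}}$; since $\ket{\psi}$ and each $\ket{\psi_m}$ are pure, $S_{\alpha,\beta}((\psi_m)_\chi)=S_{\alpha,\beta}((\psi_m)_{\overline{\chi}})$ and $S_{\alpha,\beta}(\psi_\chi)=S_{\alpha,\beta}(\psi_{\overline{\chi}})$ by unitary invariance, and concavity again gives the bound. Summing over $\chi\in\mathcal{P}(s)$ produces $\sum_m q_m E^{(s)}_{\alpha,\beta}(\ket{\psi_m})\leq E^{(s)}_{\alpha,\beta}(\ket{\psi})$. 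This is precisely where concavity of $S_{\alpha,\beta}$ is used, hence where the region $\mathcal{A}$ enters: the concavity statements quoted in Sec.~\ref{sec:Preliminary} cover exactly $(\alpha,\beta)\in\mathcal{A}$.

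With the pure-state inequality in hand, part~(iii) follows by the standard convex-roof bookkeeping: given an optimal decomposition $\rho=\sum_a t_a\psi_a$ and a multipartite separable operation $\Lambda$ with Kraus operators $M_k=\bigotimes_i M_{ki}$, $\Lambda$ sends $\rho$ to the ensemble $\{t_a\pi_{ak},\psi_{ak}\}$ with $\psi_{ak}=M_k\psi_a M_k^{\dagger}/\pi_{ak}$; applying the pure-state bound to each $\psi_a$ cut by cut (writing $M_k=A_k\otimes B_k$ across $\chi\,|\,\overline{\chi}$ with $A_k=\bigotimes_{i\in\chi}M_{ki}$, $B_k=\bigotimes_{i\in\overline{\chi}}M_{ki}$, and using $\sum_k A_k^{\dagger}A_k\otimes B_k^{\dagger}B_k=I$) and then using the automatic convexity of the convex roof gives $\sum_{a,k}t_a\pi_{ak}E^{(s)}_{\alpha,\beta}(\psi_{ak})\leq\sum_a t_a E^{(s)}_{\alpha,\beta}(\psi_a)=E^{(s)}_{\alpha,\beta}(\rho)$. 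Part~(ii) for the full range $\alpha>0,\beta\geq 0$ is the ``deterministic'' version: it suffices that $S_{\alpha,\beta}$ is Schur-concave throughout $\{\alpha>0,\beta\geq 0\}$ — it is an increasing function of the Schur-concave map $\lambda\mapsto\sum_i\lambda_i^{\alpha}$ when $0<\alpha<1$ and a decreasing function of the Schur-convex map $\lambda\mapsto\sum_i\lambda_i^{\alpha}$ when $\alpha>1$, the sign of $1/((1-\alpha)\beta)$ compensating, with the R\'{e}nyi and von~Neumann limits Schur-concave too — so each reduced-state entropy term is non-increasing under the majorization induced across every cut by a deterministic separable operation; combined with convexity of the convex roof and unitary invariance of $S_{\alpha,\beta}$ (for the local-unitary part) this gives $E^{(s)}_{\alpha,\beta}(\Lambda(\rho))\leq E^{(s)}_{\alpha,\beta}(\rho)$.

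The main obstacle is the pure-state step, and specifically extending it from one-party operations to the full class of multipartite separable operations. For a single-party instrument the identity ``the post-operation ensemble on the untouched side is a convex decomposition of the original reduced state'' is clean, but for a generic product-Kraus family $M_k=\bigotimes_i M_{ki}$ the analogous identity fails termwise, so the argument must be run cut by cut and must exploit the product form of $M_k$ together with the single global normalization $\sum_k\bigotimes_i M_{ki}^{\dagger}M_{ki}=I$; this is the delicate bridge between LOCC and separable operations and deserves the most care in the write-up. It is also the point where concavity cannot be avoided in~(iii), which is exactly why (ii), needing only Schur-concavity, holds on a strictly larger parameter set than (iii).
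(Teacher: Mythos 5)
Your single-party lemma (paragraph two) is correct, and for that special case it is arguably cleaner than the paper's argument: the identity $\sum_m q_m(\psi_m)_\chi=\psi_\chi$ (or its mirror on $\overline{\chi}$ via purity) plus concavity of $S_{\alpha,\beta}$ on $\mathcal{A}$ gives the averaged pure-state inequality directly, whereas the paper goes through a majorization relation combined with Schur-concavity. The genuine gap is exactly the point you flag but never close: the theorem concerns multipartite \emph{separable} operations, and a separable operation $\Lambda(\cdot)=\sum_k M_k(\cdot)M_k^{\dagger}$ with $M_k=\bigotimes_i M_{ki}$ is not a composition of one-party instruments (this is the LOCC $\subsetneq$ SEP issue), so your lemma cannot be ``applied cut by cut'' as claimed in paragraph three. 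Writing $M_k=A_k\otimes B_k$ across $\chi\,|\,\overline{\chi}$, only the joint normalization $\sum_k A_k^{\dagger}A_k\otimes B_k^{\dagger}B_k=I$ is available, not $\sum_k B_k^{\dagger}B_k=I$; consequently $\sum_k \pi_k(\phi_k)_\chi$ is \emph{not} a convex decomposition of $\psi_\chi$ (the $A_k$-conjugation and the $B_k^{\dagger}B_k$ insertion are correlated through the shared index $k$), and the concavity step has nothing to act on. The missing ingredient is precisely a majorization statement for separable operations acting on a pure state across a bipartite cut, $\boldsymbol{\lambda}(\psi_\chi)\prec\sum_k\pi_k\,\boldsymbol{\lambda}((\phi_k)_\chi)$, which the paper imports (Lemma~1 of Ref.~\cite{beckey2021computable}, a Gheorghiu--Griffiths/Nielsen-type result) and then combines with Schur-concavity (for (ii)) and concavity (for (iii)). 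You neither state nor prove such a lemma, so as written your argument establishes the theorem only for operations built from sequential one-party instruments, not for the full separable class.

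A second, related problem is your treatment of (ii) on the full range $\alpha>0,\ \beta\geq 0$. You invoke ``the majorization induced across every cut by a deterministic separable operation''---again the unproven lemma---and then ``convexity of the convex roof.'' But that convex-roof step requires the \emph{average} bound $\sum_k\pi_k E^{(s)}_{\alpha,\beta}(\phi_k)\leq E^{(s)}_{\alpha,\beta}(\ket{\psi})$, and Schur-concavity only yields $S_{\alpha,\beta}(\psi_\chi)\geq S_{\alpha,\beta}$ evaluated at the averaged spectrum (equivalently, at the averaged post-measurement reduced state), not at the average of the entropies; descending from the former to the latter is exactly where concavity, hence the region $\mathcal{A}$, is needed. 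So your route does not actually separate (ii) from (iii): either you supply a different mechanism for (ii)---the paper's is to compare $S_{\alpha,\beta}(\rho_r)$ with $S_{\alpha,\beta}\bigl(\sum_{j_k}p_{j_k}\rho_r^{j_k}\bigr)$, i.e.\ with the deterministic output's reduced state, which needs only Schur-concavity---or your (ii) silently collapses to $(\alpha,\beta)\in\mathcal{A}$. A minor further point: your faithfulness argument in (i) only forces each qudit in $s$ to factor out (product between every $j\in s$ and the rest), which is all that the hypothesis $E^{(s)}_{\alpha,\beta}=0$ can give when $s\neq[n]$; state that conclusion rather than full separability.
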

\begin{proof}
We adopt the method in Ref.~\cite{beckey2021computable} to prove these properties.

(i) It is straightforward to observe that \( E^{(s)}_{\alpha,\beta}(\rho) \) is non-negative, as a direct consequence of the non-negativity of the unified-entropy entropy for \(\alpha,\beta>0\). If \(\ket{\psi} = \bigotimes_{j=1}^n \ket{\psi_j}\), i.e., \(\ket{\psi}\) is a fully separable state, then all the reduced states \(\psi_{\chi}\) are pure. Consequently, we have \( E^{(s)}_{\alpha,\beta}(\ket{\psi}) = 0 \). 

For a mixed state $\rho$, since $\rho$ is fully separable, there necessarily exists a pure-state decomposition $\{p_k, \ket{\psi_k}\}$ such that all the pure states $\ket{\psi_k}$ are fully separable. Consequently, we have
\(
E^{(s)}_{\alpha,\beta}(\rho) = 0.
\)

(ii)
We first show that $E^{(s)}_{\alpha,\beta}(\ket{\psi})$ is non-increasing under the pure state $\ket{\psi}$ to mixed-state ensemble under a multipartite separable operation. 

We can rewrite $E^{(s)}_{\alpha,\beta}(\ket{\psi})$ as 
\begin{eqnarray}\label{eq:eq8}
E^{(s)}_{\alpha,\beta}(\ket{\psi}) 
= \frac{1}{2^{|s|}} \sum_{\chi \in \mathcal{P}(s)} S_{\alpha,\beta}(\psi_{\chi}) 
= \frac{1}{2^{|s|}} \sum_{\chi \in \mathcal{P}(s)} 
E_{\alpha,\beta}(\ket{\psi}_{\chi|\overline{\chi}}), 
\end{eqnarray}
where \( \overline{\chi} \) denotes the complement index subset of \( \chi \) and $E_{\alpha,\beta}(\ket{\psi}_{\chi|\overline{\chi}}):=S_{\alpha,\beta}(\psi_{\chi}) $ is the unified-entropy entanglement for the subsystems $\chi$ and $\overline{\chi}$~\cite{san2011unified}. 
Since $E^{(s)}_{\alpha,\beta}(\ket{\psi})$ contains $2^{|s|}$ terms of local unified entropy \(E_{\alpha,\beta}(\ket{\psi}_{\chi|\overline{\chi}})=S_{\alpha,\beta}(\psi_{\chi})\), it suffices to show that the local unified entropy \(S_{\alpha,\beta}\) is non-increasing under the multipartite separable operations for any \(\alpha>0\) and \(\beta\geq0\), across any possible partition.

Let $\Lambda(\cdot)= \sum_k \Lambda_k(\cdot)=\sum_kM_{k}(\cdot) M_{k}^{\dagger}$ represent a multipartite separable operation 
that admits a Kraus representation \{\(M_k\)\} of Eq.~(\ref{eq:MSEP}),
\begin{equation}\label{eq:Krausreprentation}
M_k= M_{k1} \otimes M_{k2} \otimes \cdots \otimes M_{kn}.    
\end{equation}
Since each Kraus operator \( M_k \) can be expressed as the tensor product of \( n \) mutually commuting local operators \( M_{k j} \) (\( j \in [n] \)), the proof that the local unified entropy is non-increasing under multipartite separable operations proceeds in two steps: first, by demonstrating that the local unified entropy is non-increasing under the action of each local operator \( M_{k j} \) [as shown in Eq.~(\ref{eq:local-monotone})], and second, by extending this monotonicity to the full multipartite separable operation.

Therefore, we first prove that the local unified entropy is non-increasing under the action of each local operator.
Let
\(\Lambda^{(j)}(\cdot) = \sum_k M_{k j} (\cdot) M_{k j}^\dagger\)
be a local operation acting nontrivially only on the \( j \)-th subsystem, where the operators \( {M_{k j}} \) satisfy the completeness relation
\(\sum_k M_{k j}^\dagger M_{k j} = I\).
We will show that the action of \( \Lambda^{(j)} \) on an \( n \)-qudit state \( \rho \) does not increase the unified entropy of any reduced state \( \rho_r \) corresponding to a subsystem \( r \subseteq [n] \).
For the \(k\)-th Kraus component of \(\Lambda^{(j)}\), the resulting post-measurement state is
\[
\rho_{j_k} = \frac{1}{p_{j_k}} M_{k j} \rho M_{k j}^\dagger,
\quad
 p_{j_k} = \Tr[M_{k j} \rho M_{k j}^\dagger].
\]
The corresponding reduced density operators on subsystem \(r\) are
\[
\rho_r = \Tr_{\overline{r}}[\rho],
\qquad
 \rho_r^{j_k} = \Tr_{\overline{r}}[\rho_{j_k}],
\]
where \(\overline{r}\) denotes the complement of subsystem \(r\). 
Following Lemma 1 of Ref.~\cite{beckey2021computable} and Theorem 11 in Ref.~\cite{nielsen2001majorization}, the following majorization relation holds for any subsystems \( r \) and \( j \):
\begin{equation}\label{eq:majorazition}
\boldsymbol{\lambda}(\rho_r) \prec \sum_{j_k} p_{j_k} \boldsymbol{\lambda}(\rho_r^{j_k}),\end{equation}
where \( \boldsymbol{\lambda}(X) \) represents the eigenvalue vector of the Hermitian matrix \( X \), sorted in descending order.
Using the fact proven in Appendix~\ref{app:schur-concave} that the unified entropy is Schur-concave 
(i.e. For any quantum states $\sigma$ and $\tau$, \(\boldsymbol{\lambda}(\sigma)\prec\boldsymbol{\lambda}(\tau) \implies S_{\alpha,\beta}(\sigma)\geq S_{\alpha,\beta}(\tau)\)
holds for all \( \alpha > 0 \) and \( \beta \ge 0 \)), Eq.~(\ref{eq:majorazition}) directly implies that
\begin{equation}\label{eq:local-monotone}
S_{\alpha,\beta}(\rho_r) \ge S_{\alpha,\beta}\left(\sum_{j_k} p_{j_k} \rho_{r}^{j_k}\right)=S_{\alpha,\beta}(\Lambda^{(j)}(\rho_r)),
\end{equation}
for any parameters \( \alpha > 0 \) and \( \beta \ge 0 \).

Next, we address the scenario of monotonicity to the full multipartite separable operation. Without loss of generality, let us consider two local operations \( \Lambda^{(j)} \) and \( \Lambda^{(j')} \).
Applying the results above, we obtain the following chain of majorization relations:
\begin{equation}
\boldsymbol{\lambda}(\rho_r)
\prec \sum_{j_k} p_{j_k} \boldsymbol{\lambda}(\rho_r^{j_k})
\prec \sum_{j_k, j'_{k'}} p_{j_k} p_{j'{k'}} \boldsymbol{\lambda}(\rho_r^{j_k j'_{k'}}).
\end{equation}
Here, the second majorization follows from the transitivity of majorization~\cite{nielsen2001majorization} and the relation
\begin{equation}
\boldsymbol{\lambda}(\rho_r^{j_k}) \prec \sum_{j'{k'}} p_{j'{k'}} \boldsymbol{\lambda}(\rho_r^{j_k j'{k'}}),
\end{equation}
which holds for the successive application of the local operation \( \Lambda^{(j')} \) on the post-measurement state \( \rho_r^{j_k} \).
This establishes that the majorization relation is preserved under the sequential action of multiple local operations. 
Therefore, by using Schur-concavity of unified entropy, we conclude that
\begin{equation}\label{eq:weak-monotone}
    S_{\alpha,\beta}(\rho_r)
    \geq
    S_{\alpha,\beta}(\sum_{\boldsymbol{j}}p_{\boldsymbol{j}}\rho_r^{\boldsymbol{j}})
    =
    S_{\alpha,\beta}(\Lambda^{(j')}\circ\Lambda^{(j)}(\rho_r)),
\end{equation}
where \(p_{\boldsymbol{j}}=p_{j_k}p_{j'_{k'}}\) such that \(\sum_{\boldsymbol{j}}p_{\boldsymbol{j}}=1\)
and
\(\rho_r^{\boldsymbol{j}}\) is the reduced state on the $r$-th subsystem conditioned on the \(\boldsymbol{j}:=  p_{j_k j'_{k'} }  \) outcome. 
Therefore, by mathematical induction, we have shown that the local unified entropy \(S_{\alpha,\beta}\) is non-increasing under the multipartite separable operations for any \(\alpha>0\) and \(\beta\geq0\), across any possible partition.

Therefore, from Eq.~(\ref{eq:eq8}), it follows that for any multipartite separable operation \( \Lambda \), the unified-entropy concentratable entanglement \( E^{(s)}_{\alpha,\beta}(\ket{\psi}) \) satisfies
\begin{equation}\label{eq:pure-mixed}
E^{(s)}_{\alpha,\beta}(\ket{\psi}) \geq E^{(s)}_{\alpha,\beta}(\Lambda(\psi)),
\end{equation}
for any \(\alpha>0\) and \(\beta\geq0\).

Finally, we consider the scenario of transforming a mixed state $\rho$ into a mixed-state ensemble $\{q_k, \rho_k\}$ under a multipartite separable operation $\Lambda=\sum_k\Lambda_k$. Suppose $\rho=\sum_ip_i\psi_i$ is the optimal decomposition of $E^{(s)}_{\alpha,\beta}(\rho)$, i.e., $E^{(s)}_{\alpha,\beta}(\rho)=\sum_ip_iE^{(s)}_{\alpha,\beta}(\ket{\psi_i})$. Then, we have 
\begin{eqnarray}
\Lambda(\rho)&=&\nonumber \sum_k\Lambda_k(\rho)
\\&=&\nonumber
\sum_k\Lambda_k(\sum_ip_i\psi_i)
\\&=&\nonumber
\sum_k\sum_ip_i\Lambda_k(\psi_i)
\\&=&
\sum_k\sum_ip_jp_{ik}\rho_{ik},
\end{eqnarray}
where $\rho_{ik}=\frac{1}{p_{ik}}\Lambda_k(\psi_i)$ with $p_{ik}=\Tr[\Lambda_k(\psi_i)]$. Let $\rho_{i}=\sum_kp_{ik}\rho_{ik}$ and $E_{\alpha,\beta}^{(s)}(\rho_i)=\sum_lp_{il}E_{\alpha,\beta}^{(s)}(\ket{\psi_{il}})$ be the optimal decomposition. 
We can see that $\rho_k=\frac{1}{q_k}\Lambda_k(\rho)$ with $q_k=\Tr[\Lambda_k(\rho)]$. 
Thus, we have 
\begin{eqnarray}
E^{(s)}_{\alpha,\beta}(\rho) 
&=&\nonumber 
\sum_ip_iE^{(s)}_{\alpha,\beta}(\ket{\psi_i})
\\&\overset{(a)}{\geq}&\nonumber
\sum_ip_iE^{(s)}_{\alpha,\beta}(\sum_kp_{ik}\rho_{ik})
\\&=&\nonumber
\sum_ip_iE^{(s)}_{\alpha,\beta}(\rho_i)
\\&=&\nonumber
\sum_i\sum_lp_ip_{il}E^{(s)}_{\alpha,\beta}(\ket{\psi_{il}})
\\&\overset{(b)}{\geq}&\nonumber
E^{(s)}_{\alpha,\beta}(\sum_i\sum_lp_ip_{il}\psi_{il})
\\&=&\nonumber
E^{(s)}_{\alpha,\beta}(\sum_kq_k\rho_k)
\\&=&
E^{(s)}_{\alpha,\beta}(\Lambda(\rho)),
\end{eqnarray}
where (a) follow from Eq.~(\ref{eq:pure-mixed}) and $\Lambda(\psi_i)=\sum_kp_{ik}\rho_{ik}$, (b) follows from the convexity of convex-roof extended measure~\cite{vidal2000entanglement}. 


(iii) The proof follows the same reasoning as property~(ii) and consists of two steps. First, we show that during the separable transformation from a pure state to a mixed-state ensemble, the local unified entropy is non-increasing, on average, under each local operator 
\( M_{kj} \) for all parameters \((\alpha,\beta)\in\mathcal{A}\). Second, we establish that for transformations between mixed-state ensembles, the unified-entropy concentratable entanglement is non-increasing, on average, under multipartite separable operations for all parameters 
\((\alpha,\beta)\in\mathcal{A}\). 
The key point of the two steps of the proof lies in utilizing the concavity of the unified entropy within the parameter domain \(\mathcal{A}\)~\cite{hu2006generalized,Rastegin2011some}. By combining this property with Eqs.~(\ref{eq:local-monotone}) and (\ref{eq:weak-monotone}), the inequalities can be further relaxed to
\begin{equation}\label{eq:local-monotone-iii}
S_{\alpha,\beta}(\rho_r)
\ge S_{\alpha,\beta}\left(\sum_{j_k} p_{j_k} \rho_{r}^{j_k}\right)
\ge \sum_{j_k} p_{j_k} S_{\alpha,\beta}\left(\rho_{r}^{j_k}\right),
\end{equation}
and
\begin{equation}\label{eq:weak-monotone-iii}
S_{\alpha,\beta}(\rho_r)
\ge
S_{\alpha,\beta}\left( \sum_{\boldsymbol{j}} p_{\boldsymbol{j}} \rho_r^{\boldsymbol{j}} \right)
\ge
\sum_{\boldsymbol{j}} p_{\boldsymbol{j}} S_{\alpha,\beta}\left( \rho_r^{\boldsymbol{j}} \right).
\end{equation}
Therefore, by combining Eqs.~(\ref{eq:eq8}) and (\ref{eq:weak-monotone-iii}), we establish the counterpart of Eq.~(\ref{eq:pure-mixed}). Specifically, for any multipartite separable operation \( \Lambda \), the unified-entropy concentratable entanglement \( E^{(s)}_{\alpha,\beta}(\ket{\psi}) \) satisfies
\begin{equation}\label{eq:pure-mixed-iii}
E^{(s)}_{\alpha,\beta}(\ket{\psi}) \geq \sum_{k} p_k E^{(s)}_{\alpha,\beta}(\rho_k),
\end{equation}
for any \((\alpha, \beta) \in \mathcal{A}\), where \( p_k = \Tr[\Lambda_k(\psi)] \) and \( \rho_k = \frac{1}{p_k}\Lambda_k(\psi) \).

Finally, we consider the scenario of transforming a mixed state $\rho$ into a mixed-state ensemble $\{q_k, \rho_k\}$ under a multipartite separable operation $\Lambda=\sum_k\Lambda_k$. Let $\rho=\sum_ip_i\psi_i$ be the optimal decomposition of $E^{(s)}_{\alpha,\beta}(\rho)$, i.e., $E^{(s)}_{\alpha,\beta}(\rho)=\sum_ip_iE^{(s)}_{\alpha,\beta}(\ket{\psi_i})$. Similarly, we have 
\begin{eqnarray}
\Lambda(\rho)
= \sum_k\Lambda_k(\sum_ip_i\psi_i)
=\sum_k\sum_ip_ip_{ik}\rho_{ik},
\end{eqnarray}
where $\rho_{ik}=\frac{1}{p_{ik}}\Lambda_k(\psi_i)$ with $p_{ik}=\Tr[\Lambda_k(\psi_i)]$. Let $\rho_{ik}=\sum_lp_{ikl}\psi_{ikl}$ be the optimal decomposition of $E^{(s)}_{\alpha,\beta}(\rho_{ik})$. Since $\rho_k=\frac{1}{q_k}\Lambda_k(\rho)$ with $q_k=\Tr[\Lambda_k(\rho)]$, we then obtain
\begin{eqnarray}
E^{(s)}_{\alpha,\beta}(\rho) &=&\nonumber \sum_ip_iE^{(s)}_{\alpha,\beta}(\ket{\psi_i})
\\&\overset{(a)}{\geq}&\nonumber
\sum_ip_i\sum_kp_{ik}E^{(s)}_{\alpha,\beta}(\rho_{ik})
\\&=&\nonumber
\sum_i\sum_k\sum_lp_ip_{ik}p_{ikl}E^{(s)}_{\alpha,\beta}(\psi_{ikl})
\\&\overset{(b)}{\geq}&
\sum_kq_kE^{(s)}_{\alpha,\beta}(\rho_k),
\end{eqnarray}
where (a) follow from Eq.~(\ref{eq:pure-mixed-iii}), (b) follows from $q_k\rho_k=\sum_{il}p_ip_{ik}p_{ikl}\psi_{ikl}$ with $q_k=\Tr[\sum_{il}p_ip_{ik}p_{ikl}\psi_{ikl}]$ and $\sum_{il}\frac{p_ip_{ik}p_{jkl}}{q_k}E^{(s)}_{\alpha,\beta}(\ket{\psi_{ikl}})\geq E^{(s)}_{\alpha,\beta}(\rho_k)$. 
\end{proof}
Since the set of separable operations contains all LOCC operations, Theorem~\ref{the:vNCE} further implies that for any \(\alpha > 0\) and \(\beta \geq 0\), the unified-entropy concentratable entanglement \(E^{(s)}_{\alpha,\beta}(\rho)\) is non-increasing under LOCC. Hence, it serves as a valid multipartite entanglement monotone. Moreover, when \((\alpha, \beta) \in \mathcal{A}\), \(E^{(s)}_{\alpha,\beta}(\rho)\) is non-increasing, on average, under LOCC operations.

It is worth noting that the parameter domain \(\mathcal{A}\) covers a broad range of entanglement families, including the R\'{e}nyi concentratable entanglements (\(\beta = 0\)) for \(0 < \alpha < 1\), the Tsallis concentratable entanglements (\(\beta = 1\)) for \(\alpha > 0\), the von Neumann concentratable entanglements (\(\alpha = 1\)), and the orginal concentratable entanglements (\(\alpha = 2, \beta = 1\)).

We now show that the unified-entropy concentratable entanglements has the following properties:
For any subsets $s,s'\subset [n]$, the unified-entropy concentratable entanglements satisfies the subadditivity relation:
\begin{proposition}\label{prop:subadd}
For any multipartite pure state $\ket{\psi}$ and the parameters range \( \mathcal{B}:= \{(\alpha,\beta):\alpha\geq1,\beta=1\}\cup\{(\alpha,\beta):\alpha=0,\beta=0\}\), the subadditivity of \(E^{(s)}_{\alpha,\beta}\) holds:
\begin{equation}\label{eq:subadditivity}
E^{(s\cup s')} _{\alpha,\beta}  (\ket{\psi})\leq E^{(s)} _{\alpha,\beta}(\ket{\psi}) 
+ E^{(s')} _{\alpha,\beta}(\ket{\psi}),    
\end{equation}
where $s\cap s'=\emptyset$.     
\end{proposition}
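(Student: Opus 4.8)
The plan is to reduce the claimed subadditivity to the ordinary subadditivity of the single unified entropy $S_{\alpha,\beta}$ evaluated on bipartite reduced states, exploiting the product structure of the power set $\mathcal{P}(s\cup s')$. Since $s\cap s'=\emptyset$, every $\chi\in\mathcal{P}(s\cup s')$ splits uniquely as $\chi=\chi_1\cup\chi_2$ with $\chi_1:=\chi\cap s\in\mathcal{P}(s)$, $\chi_2:=\chi\cap s'\in\mathcal{P}(s')$, and $\chi_1\cap\chi_2=\emptyset$; this gives a bijection $\mathcal{P}(s\cup s')\leftrightarrow\mathcal{P}(s)\times\mathcal{P}(s')$ together with $|s\cup s'|=|s|+|s'|$. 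First I would rewrite
\[
E^{(s\cup s')}_{\alpha,\beta}(\ket{\psi})=\frac{1}{2^{|s|+|s'|}}\sum_{\chi_1\in\mathcal{P}(s)}\sum_{\chi_2\in\mathcal{P}(s')}S_{\alpha,\beta}(\psi_{\chi_1\cup\chi_2}),
\]
and, using $|\mathcal{P}(s')|=2^{|s'|}$ and $|\mathcal{P}(s)|=2^{|s|}$ to introduce dummy summations, also
\[
E^{(s)}_{\alpha,\beta}(\ket{\psi})+E^{(s')}_{\alpha,\beta}(\ket{\psi})=\frac{1}{2^{|s|+|s'|}}\sum_{\chi_1\in\mathcal{P}(s)}\sum_{\chi_2\in\mathcal{P}(s')}\bigl[S_{\alpha,\beta}(\psi_{\chi_1})+S_{\alpha,\beta}(\psi_{\chi_2})\bigr].
\]
Comparing these two double sums term by term reduces Eq.~(\ref{eq:subadditivity}) to the single-term inequality $S_{\alpha,\beta}(\psi_{\chi_1\cup\chi_2})\le S_{\alpha,\beta}(\psi_{\chi_1})+S_{\alpha,\beta}(\psi_{\chi_2})$ for all disjoint $\chi_1,\chi_2$.

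Next I would observe that, precisely because $\chi_1\cap\chi_2=\emptyset$, the states $\psi_{\chi_1}$ and $\psi_{\chi_2}$ are exactly the two marginals of the bipartite state $\psi_{\chi_1\cup\chi_2}$ on the registers labelled by $\chi_1$ and $\chi_2$. Hence the required term-wise bound is nothing but subadditivity of $S_{\alpha,\beta}$, which I would then check case by case over $\mathcal{B}$: for $\alpha=1$, where $S_{\alpha,1}$ is the von Neumann entropy, it is the standard subadditivity of the von Neumann entropy; for $\alpha>1$, $\beta=1$, where $S_{\alpha,1}=T_\alpha$ is the Tsallis entropy, it is Audenaert's subadditivity theorem for the $q$-entropy with $q>1$; and for $(\alpha,\beta)=(0,0)$, where $S_{0,0}(\rho)=R_0(\rho)=\log_2\rank\rho$, it follows from $\rank(\psi_{\chi_1\cup\chi_2})\le\rank(\psi_{\chi_1})\,\rank(\psi_{\chi_2})$, since $\supp(\psi_{\chi_1\cup\chi_2})\subseteq\supp(\psi_{\chi_1})\otimes\supp(\psi_{\chi_2})$. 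The degenerate cases $\chi_1=\emptyset$ or $\chi_2=\emptyset$ turn the term-wise inequality into an equality via the convention $S_{\alpha,\beta}(\psi_\emptyset)=0$. Summing the term-wise bounds over all $(\chi_1,\chi_2)\in\mathcal{P}(s)\times\mathcal{P}(s')$ then yields Eq.~(\ref{eq:subadditivity}).

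The combinatorial bookkeeping here is entirely routine; the real content sits in the parameter restriction $\mathcal{B}$, which is forced by the fact that $S_{\alpha,\beta}$ is not subadditive for general $(\alpha,\beta)$ — for instance the Rényi and Tsallis entropies with $0<\alpha<1$ fail subadditivity. Accordingly, I expect the main obstacle to be not the architecture of the argument but correctly identifying and invoking the subadditivity facts that do hold, the nontrivial input being the subadditivity of the Tsallis entropy for $\alpha>1$ (Audenaert), while the von Neumann and Rényi-$0$ cases are classical. I would also stress that the proposition is asserted only for pure states, so no convex-roof optimization over decompositions is involved; extending subadditivity to mixed states would require a separate argument and is not attempted here.
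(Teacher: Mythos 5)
Your proposal is correct and takes essentially the same route as the paper: both reduce the claim, via the identification $\mathcal{P}(s\cup s')\cong\mathcal{P}(s)\times\mathcal{P}(s')$ and $|s\cup s'|=|s|+|s'|$, to the term-wise inequality $S_{\alpha,\beta}(\psi_{\chi_1\cup\chi_2})\le S_{\alpha,\beta}(\psi_{\chi_1})+S_{\alpha,\beta}(\psi_{\chi_2})$ for disjoint $\chi_1\subseteq s$, $\chi_2\subseteq s'$ (the paper's tables encode exactly this cell-by-cell pairing), and then invoke subadditivity of the unified entropy on $\mathcal{B}$. The only difference is how that key fact is justified — the paper cites it wholesale, while you verify it case by case (von Neumann subadditivity, Audenaert's subadditivity of Tsallis entropy for $\alpha>1$, and the rank/support bound for the R\'{e}nyi-$0$ case) — which is a sound and slightly more self-contained justification of the same ingredient.
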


\begin{proof}
It is easy to see that proving $E^{(s\cup s')} _{\alpha,\beta}(\ket{\psi})\leq E^{(s)} _{\alpha,\beta}(\ket{\psi}) + E^{(s')} _{\alpha,\beta}(\ket{\psi})$ is equivalent to proving 

\begin{eqnarray}\label{eq:subaddtive}
2^{|s'|}\cdot\sum_{\chi_s\in\mathcal{P}(s)}S _{\alpha,\beta}(\psi_{\chi_s})
&+&\nonumber  2^{|s|} \cdot\sum_{\chi_{s'}\in\mathcal{P}(s')}S _{\alpha,\beta}(\psi_{\chi_{s'}})
\\&\geq& 
\sum_{\chi\in\mathcal{P}(s)\times\mathcal{P}(s')}S _{\alpha,\beta}(\psi_{\chi}),   
\end{eqnarray}
where we have used that fact that $s\cap s'=\emptyset\implies
\mathcal{P}(s\cup s')\cong \mathcal{P}(s)\times\mathcal{P}(s')$ and $|s \cup s'|=|s|+|s'|$.

To demonstrate the validity of Eq.~(\ref{eq:subaddtive}), we adopt the following proof strategy: First, observe that in this inequality, \(2^{|s'|} \cdot \sum_{\chi_s \in \mathcal{P}(s)} S  _{\alpha,\beta}  (\psi_{\chi_s})\), \(2^{|s|} \cdot \sum_{\chi_{s'} \in \mathcal{P}(s')} S _{\alpha,\beta}
 (\psi_{\chi_{s'}})\), and \(\sum_{\chi \in \mathcal{P}(s) \times \mathcal{P}(s')} S _{\alpha,\beta}(\psi_{\chi})\) each contains \(2^{|s|+|s'|}\) terms of the form \(S _{\alpha,\beta}(\psi_{\chi})\), where \(\chi \in \mathcal{P}(s \cup s')\). Next, we cleverly organize the terms in \(2^{|s'|} \cdot \sum_{\chi_s \in \mathcal{P}(s)} S  _{\alpha,\beta}  (\psi_{\chi_s})\) into a table with \(2^{|s'|}\) rows and \(2^{|s|}\) columns. Similarly, the terms in \(2^{|s|} \cdot \sum_{\chi_{s'} \in \mathcal{P}(s')} S _{\alpha,\beta}(\psi_{\chi_{s'}})\) and \(\sum_{\chi \in \mathcal{P}(s) \times \mathcal{P}(s')} S _{\alpha,\beta}(\psi_{\chi})\) are arranged into such a table. 

In these tables, each cell corresponds to three terms in the form \(S _{\alpha,\beta}(\psi_A)\), \(S _{\alpha,\beta}(\psi_B)\), and \(S_{\alpha,\beta}(\psi_{A \cup B})\), where \(A, B \in \mathcal{P}(s \cup s')\). Finally, by invoking the subadditivity of unified entropy for the range \(\mathcal{B}\), each cell satisfies the non-negative relation~\cite{Nielsen2010,Rastegin2011some,van2002renyi}
\[
S _{\alpha,\beta}(\psi_A) + S _{\alpha,\beta}(\psi_B) \geq S_{\alpha,\beta}(\psi_{A \cup B}),\] completing the proof.

To better illustrate the effectiveness of the proof strategy, we take \( s = \{1, 2, 3\} \) and \( s' = \{4, 5\} \) as an example. Other more general cases can be similarly and directly inferred. As shown in Table~\ref{tab:example}, each cell of the table corresponds to a term \( S_{\chi} := S_{\alpha,\beta}(\psi_{\chi}) \), where \( \chi \in \mathcal{P}(s \cup s') \). The elements in each cell of the first row of the table correspond to each von Neumann entropy \( S_{\chi_s} \) associated with \( \mathcal{P}(s) \), arranged in ascending order based on the cardinality of the subsets. Similarly, the elements in each cell of the first column of the table correspond to each von Neumann entropy \( S_{\chi_{s'}} \) associated with \( \mathcal{P}(s') \). The elements in the remaining cells of Table~\ref{tab:example} are determined by their corresponding rows and columns. For instance, the position of the cell containing \( S_{134} \) is determined by the row corresponding to \( S_4 \) (the second row) and the column corresponding to \( S_{13} \) (the sixth column).

\begin{table}[h!] 
\centering 
\begin{tabular}{|c|c|c|c|c|c|c|c|}
\hline
$S_{\emptyset}$ & $S_{1}$ & $S_{2}$ & $S_{3}$ & $S_{12}$    & $\textcolor{black}{S_{13}}$ & $S_{23}$   & $S_{123}$ \\ \hline
$\textcolor{black}{S_{4}}$         & $S_{14}$  & $S_{24}$   & $S_{34}$    & $S_{124}$    & $\textcolor{red}{S_{134}}$    & $S_{234}$      & $S_{1234}$   \\ \hline
$S_{5}$         & $S_{15}$  & $S_{25}$   & $S_{35}$    & $S_{125}$       & $S_{135}$   &$S_{235}$     & $S_{1235}$  \\ \hline
$S_{45}$        & $S_{145}$ & $S_{245}$  & $S_{345}$ & $S_{1245}$     & $S_{1345}$  & $S_{2345}$    & $S_{12345}$  \\ \hline
\end{tabular}
\caption{Each cell in the table contains one term from the sum \( \sum_{\chi \in \mathcal{P}(s) \times \mathcal{P}(s')} S_{\alpha,\beta}(\psi_{\chi}) \).}
\label{tab:example} 
\end{table}

As shown in Table~\ref{tab:example-2}, for \(2^{|s'|} \cdot \sum_{\chi_s \in \mathcal{P}(s)}  
S _{\alpha,\beta}(\psi_{\chi_s})\), we place all terms \(S_{\alpha,\beta}(\psi_{\chi_s})\) (\(\chi_s \in \mathcal{P}(s)\)), totaling \(2^{|s|}\), in the first row of a table. Each column of the table contains identical entries corresponding to the same term. 

As shown in Table~\ref{tab:example-3}, for \(2^{|s|}\cdot\sum_{\chi_{s'}\in\mathcal{P}(s')}S 
_{\alpha,\beta}(\psi_{\chi_{s'}}) \), we place all terms \(S_{\alpha,\beta}(\psi_{\chi_{s'}})\) (\(\chi_{s'} \in \mathcal{P}(s')\)), totaling \(2^{|s'|}\), in the first column of a table. Each row of the table contains identical entries corresponding to the same term. 

Finally, by comparing the three terms corresponding to each cell in the three tables, the proof of Eq.~(\ref{eq:subaddtive}) can be easily completed using the subadditivity property of unified entropy.

\begin{table}[h!] 
\centering 
\begin{tabular}{|c|c|c|c|c|c|c|c|}
\hline
$S_{\emptyset}$ & $S_{1}$ & $S_{2}$ & $S_{3}$ & $S_{12}$    & $S_{13}$ & $S_{23}$   & $S_{123}$ \\ \hline
$S_{\emptyset}$ & $S_{1}$ & $S_{2}$ & $S_{3}$ & $S_{12}$    & $\textcolor{red}{S_{13}}$ & $S_{23}$   & $S_{123}$  \\ \hline
$S_{\emptyset}$ & $S_{1}$ & $S_{2}$ & $S_{3}$ & $S_{12}$    & $S_{13}$ & $S_{23}$   & $S_{123}$ \\ \hline
$S_{\emptyset}$ & $S_{1}$ & $S_{2}$ & $S_{3}$ & $S_{12}$    & $S_{13}$ & $S_{23}$   & $S_{123}$  \\ \hline
\end{tabular}
\caption{Each cell in the table contains one term from the sum \(2^{|s'|}\cdot\sum_{\chi_s\in\mathcal{P}(s)}S_{\alpha,\beta}(\psi_{\chi_s}) \).}
\label{tab:example-2} 
\end{table}


\begin{table}[h!] 
\centering 
\begin{tabular}{|c|c|c|c|c|c|c|c|}
\hline
$S_{\emptyset}$ & $S_{\emptyset}$ & $S_{\emptyset}$ & $S_{\emptyset}$ & $S_{\emptyset}$    & $S_{\emptyset}$ & $S_{\emptyset}$   & $S_{\emptyset}$ \\ \hline
$S_{4}$         & $S_{4}$ & $S_{4}$ & $S_{4}$ & $S_{4}$    & $\textcolor{red}{S_{4}}$ & $S_{4}$   & $S_{4}$  \\ \hline
$S_{5}$         & $S_{5}$ & $S_{5}$ & $S_{5}$ & $S_{5}$    & $S_{5}$ & $S_{5}$   & $S_{5}$ \\ \hline
$S_{45}$        & $S_{45}$ & $S_{45}$ & $S_{45}$ & $S_{45}$    & $S_{45}$ & $S_{45}$   & $S_{45}$  \\ \hline
\end{tabular}
\caption{Each cell in the table contains one term from the sum \(2^{|s|}\cdot\sum_{\chi_{s'}\in\mathcal{P}(s')}S_{\alpha,\beta}(\psi_{\chi_{s'}}) \).}
\label{tab:example-3} 
\end{table}    
\end{proof}

From Proposition~\ref{prop:subadd}, it follows that within the parameter domain $\mathcal{B}$, the unified-entropy concentratable entanglement of two independent subsystems considered jointly cannot exceed the sum of their individual unified-entropy concentratable entanglements.
It is worth noting that the parameter domain $\mathcal{B}$ encompasses the entire family of Tsallis concentratable entanglements with \(\alpha > 1\), thereby providing a positive answer to Conjecture~1.2 proposed in Ref.~\cite{Liu2025-}.

\begin{proposition}\label{prop:subadditive}
Let $A$ and $B$ be disjoint subsystems, and let $\ket{\psi} = \ket{\psi}_A \otimes \ket{\psi}_B$. For $\alpha > 0$ and $\beta \geq 0$, the unified-entropy concentratable entanglements satisfies the following identity:

\begin{eqnarray} \label{eq:qsi-additive}
E^{(s)}_{\alpha,\beta}(\ket{\psi})
    &=&\nonumber E^{(s\cap A)}_{\alpha,\beta}(\ket{\psi}_A)+E^{(s\cap B)}_{\alpha,\beta}(\ket{\psi}_B)
    \\&+& (1-\alpha)\beta E^{(s\cap A)}_{\alpha,\beta}(\ket{\psi}_A)E^{(s\cap B)}_{\alpha,\beta}(\ket{\psi}_B).
\end{eqnarray}   

In particular, when $\alpha = 1$ or $\beta = 0$, the unified-entropy concentratable entanglements reduces to the von Neumann concentratable entanglements or the R\'{e}nyi concentratable entanglements, respectively. In these cases, the equality simplifies to:
\begin{equation}\label{eq:vn-re-add-1}
E^{(s)}_{\alpha,\beta}(\ket{\psi})
= E^{(s \cap A)}_{\alpha,\beta}(\ket{\psi}_A) + E^{(s \cap B)}_{\alpha,\beta}(\ket{\psi}_B).
\end{equation}

Furthermore, when $s = [n]$, the expression becomes:
\begin{eqnarray}
E^{(s)}_{\alpha,\beta}(\ket{\psi})
    &=&\nonumber E^{(A)}_{\alpha,\beta}(\ket{\psi}_A)+E^{(B)}_{\alpha,\beta}(\ket{\psi}_B)
    \\&+& (1-\alpha)\beta E^{(A)}_{\alpha,\beta}(\ket{\psi}_A)E^{(B)}_{\alpha,\beta}(\ket{\psi}_B).
\end{eqnarray}
and, in the case $\alpha = 1$ or $\beta = 0$, further reduces to:
\begin{equation}\label{eq:vn-ren-add-2}
E^{(s)}_{\alpha,\beta}(\ket{\psi})
= E^{(A)}_{\alpha,\beta}(\ket{\psi}_A) + E^{(B)}_{\alpha,\beta}(\ket{\psi}_B).
\end{equation}
\end{proposition}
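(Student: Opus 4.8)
The plan is to reduce \eqref{eq:qsi-additive} to a single algebraic property of the unified entropy, its \emph{pseudo-additivity} on product states: for any density operators $\rho$ and $\sigma$,
\[
S_{\alpha,\beta}(\rho\otimes\sigma)=S_{\alpha,\beta}(\rho)+S_{\alpha,\beta}(\sigma)+(1-\alpha)\beta\,S_{\alpha,\beta}(\rho)\,S_{\alpha,\beta}(\sigma).
\]
This is a one-line computation: put $c:=(1-\alpha)\beta$, $x:=(\Tr\rho^{\alpha})^{\beta}$, $y:=(\Tr\sigma^{\alpha})^{\beta}$; since $\Tr(\rho\otimes\sigma)^{\alpha}=\Tr\rho^{\alpha}\cdot\Tr\sigma^{\alpha}$ we get $(\Tr(\rho\otimes\sigma)^{\alpha})^{\beta}=xy$, and one checks that both sides of the displayed identity equal $(xy-1)/c$. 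In the two limiting regimes $\alpha\to1$ and $\beta\to0$ the prefactor $c$ tends to $0$ and this degenerates to the familiar additivity of the von Neumann and R\'enyi entropies on product states, which I would record as a separate (trivial) case.

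Next I would unpack the product structure of $\ket{\psi}$. Because $A$ and $B$ are disjoint and $\ket{\psi}=\ket{\psi}_A\otimes\ket{\psi}_B$ is an $n$-qudit state, $[n]=A\cup B$ is a disjoint union and hence so is $s=(s\cap A)\cup(s\cap B)$; this gives the canonical bijection $\mathcal{P}(s)\cong\mathcal{P}(s\cap A)\times\mathcal{P}(s\cap B)$, $\chi\leftrightarrow(\chi_A,\chi_B)$ with $\chi_A=\chi\cap A$, $\chi_B=\chi\cap B$. Tracing out the complementary subsystems in each factor separately shows $\psi_\chi=(\psi_A)_{\chi_A}\otimes(\psi_B)_{\chi_B}$, where $(\psi_A)_{\chi_A}$ is the reduced state of $\ket{\psi}_A$ on $\chi_A$ (and similarly for $B$). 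The empty-set convention $S_{\alpha,\beta}(\psi_\emptyset)=0$ is consistent with the pseudo-additivity identity, since $0+0+c\cdot0\cdot0=0$.

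I would then substitute the pseudo-additivity identity into $E^{(s)}_{\alpha,\beta}(\ket{\psi})=2^{-|s|}\sum_{\chi\in\mathcal{P}(s)}S_{\alpha,\beta}(\psi_\chi)$ and split the resulting double sum over $(\chi_A,\chi_B)$ into three pieces. Using $|s|=|s\cap A|+|s\cap B|$ and $|\mathcal{P}(s\cap B)|=2^{|s\cap B|}$, the first piece collapses the $\chi_B$-summation to the factor $2^{|s\cap B|}$ and reproduces $2^{-|s\cap A|}\sum_{\chi_A}S_{\alpha,\beta}\big((\psi_A)_{\chi_A}\big)=E^{(s\cap A)}_{\alpha,\beta}(\ket{\psi}_A)$; by symmetry the second piece gives $E^{(s\cap B)}_{\alpha,\beta}(\ket{\psi}_B)$; and the third piece is the product of the two one-variable sums, namely $(1-\alpha)\beta\,E^{(s\cap A)}_{\alpha,\beta}(\ket{\psi}_A)\,E^{(s\cap B)}_{\alpha,\beta}(\ket{\psi}_B)$. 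Adding the pieces yields \eqref{eq:qsi-additive}. The stated special cases are then immediate: setting $\alpha=1$ or $\beta=0$ annihilates the cross term and gives \eqref{eq:vn-re-add-1}, while taking $s=[n]$ turns $s\cap A$ and $s\cap B$ into $A$ and $B$ and gives the last two displays, including \eqref{eq:vn-ren-add-2}.

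I do not anticipate a real obstacle: the content is entirely in (a) pinning down the pseudo-additivity constant and checking the $\alpha\to1$, $\beta\to0$ limits, and (b) the bookkeeping that the power set of a disjoint union factorizes and is compatible with the tensor factorization of the reduced states. Both are routine once the notation is set up; the only place to be slightly careful is keeping the normalization $2^{-|s|}=2^{-|s\cap A|}2^{-|s\cap B|}$ matched against the cardinalities of $\mathcal{P}(s\cap A)$ and $\mathcal{P}(s\cap B)$ when the sums are separated.
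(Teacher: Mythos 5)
Your proposal is correct and follows essentially the same route as the paper: both arguments rest on the multiplicativity of $\Tr\rho^{\alpha}$ under tensor products together with the factorizations $\mathcal{P}(s)\cong\mathcal{P}(s\cap A)\times\mathcal{P}(s\cap B)$ and $\psi_{\chi}=(\psi_A)_{\chi\cap A}\otimes(\psi_B)_{\chi\cap B}$, with the $\alpha=1$ and $\beta=0$ cases handled separately via ordinary additivity of the von Neumann and R\'enyi entropies. The only packaging difference is that you isolate the pseudo-additivity of $S_{\alpha,\beta}$ as a termwise lemma and then sum the three resulting pieces, whereas the paper factors the double sum of $(\Tr\psi_{\chi}^{\alpha})^{\beta}$ into a product of single sums before re-expressing it in terms of the concentratable entanglements; the two computations are algebraically equivalent.
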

\begin{proof}
We first discuss the case of \(\alpha=1\) or \(\beta=0\). In this case, we have
\begin{eqnarray}
E^{(s)}_{\alpha,\beta}(\ket{\psi}) &=&\nonumber \frac{1}{2^{|s|}} \sum_{\chi \in \mathcal{P}(s)} S_{\alpha,\beta}(\psi_{\chi})  
\\\nonumber  &=&
\frac{1}{2^{|s|}} \sum_{\chi \in \mathcal{P}(s)} S_{\alpha,\beta}(\psi_{\chi\cap A}\otimes\psi_{\chi\cap B})  
\\\nonumber  &=&
\frac{1}{2^{|s|}} \sum_{\chi \in \mathcal{P}(s)} \left(S_{\alpha,\beta}(\psi_{\chi\cap A})+  S_{\alpha,\beta}(\psi_{\chi\cap B})\right)
\\\nonumber  &=&
\frac{1}{2^{|s|}} \sum_{\chi_a \in \mathcal{P}(s\cap A)} S_{\alpha,\beta}(\psi_{\chi_a})+\frac{1}{2^{|s|}} \sum_{\chi_b \in \mathcal{P}(s\cap B)} S_{\alpha,\beta}(\psi_{\chi_b})
\\  &=&
E^{(s\cap A)}_{\alpha,\beta}(\ket{\psi}_A)+E^{(s\cap B)}_{\alpha,\beta}(\ket{\psi}_B),
\end{eqnarray} 
where the third line holds follow from the additivity of von Neumann entropy (\(\alpha=1\)) and R\'{e}nyi entropy (\(\beta=0\)), the fourth line holds follows from $A\cup B=s, A\cap B=\emptyset\implies \mathcal{P}(s)\cong \mathcal{P}(s\cap A)\times\mathcal{P}(s\cap B)$ and $|s|=|s\cap A|+|s\cap B|$.    

For $\alpha > 0$, $\alpha \neq 1$, and $\beta > 0$, we first note that the following relation holds:
\begin{widetext}
\begin{eqnarray}
E^{(s)}_{\alpha,\beta}(\ket{\psi}) &=&\nonumber \frac{1}{2^{|s|}} \sum_{\chi \in \mathcal{P}(s)} S_{\alpha,\beta}(\psi_{\chi})  
\\\nonumber  &=&
\frac{1}{2^{|s|}} \sum_{\chi \in \mathcal{P}(s)} S_{\alpha,\beta}(\psi_{\chi\cap A}\otimes\psi_{\chi\cap B})  
\\\nonumber  &=&
\frac{1}{2^{|s|}} \sum_{\chi \in \mathcal{P}(s)} \frac{1}{(1-\alpha)\beta}
\left( \Tr(\psi_{\chi\cap A}^{\alpha}\otimes\psi_{\chi\cap B}^{\alpha})^{\beta}-1\right)
\\\nonumber  &=&
\frac{1}{2^{|s|}} \sum_{\chi \in \mathcal{P}(s)} \frac{1}{(1-\alpha)\beta}\left( (\Tr\psi_{\chi\cap A}^{\alpha})^{\beta}(\Tr\psi_{\chi\cap B}^{\alpha})^{\beta}-1\right)
\\\nonumber  &=&
\frac{1}{2^{|s|}}\frac{1}{(1-\alpha)\beta}
\sum_{\chi_a \in \mathcal{P}(s\cap A)} (\Tr\psi_{\chi_a}^{\alpha})^{\beta}
\cdot
\sum_{\chi_b \in \mathcal{P}(s\cap B)}(\Tr\psi_{\chi_b}^{\alpha})^{\beta}
-\frac{1}{(1-\alpha)\beta}.
\end{eqnarray}     
\end{widetext}
Moreover, noting that
\[
\sum_{\chi_X \in \mathcal{P}(s \cap X)} (\Tr\psi_{\chi_X}^{\alpha})^{\beta} = 2^{|s \cap X|} (1 - \alpha) \beta \, E^{(s \cap X)}_{\alpha,\beta}(\ket{\psi})
\]
holds for  \(X = A, B\),
and substituting this expression into the equation above yields the simplified form:
\begin{eqnarray}
E^{(s)}_{\alpha,\beta}(\ket{\psi})
    &=&\nonumber E^{(s\cap A)}_{\alpha,\beta}(\ket{\psi}_A)+E^{(s\cap B)}_{\alpha,\beta}(\ket{\psi}_B)
    \\&+& (1-\alpha)\beta E^{(s\cap A)}_{\alpha,\beta}(\ket{\psi}_A)E^{(s\cap B)}_{\alpha,\beta}(\ket{\psi}_B).
\end{eqnarray}
\end{proof}

\begin{remark}
A special case of Proposition \ref{prop:subadditive} is obtained when \(\alpha=2\) and \(\beta=1\). In this case, 
\begin{eqnarray}
E^{(s)}_{\alpha,\beta}(\ket{\psi})
&=&\nonumber E^{(s \cap A)}_{\alpha,\beta}(\ket{\psi}_A)
+ E^{(s \cap B)}_{\alpha,\beta}(\ket{\psi}_B)
\\&-& E^{(s \cap A)}_{\alpha,\beta}(\ket{\psi}) E^{(s \cap B)}_{\alpha,\beta}(\ket{\psi}),   
\end{eqnarray}
which recovers a result of concentratable entanglements $C^{(s)}$ previously reported in Ref.~\cite{Schatzki2024-PhysRevResearch.6.023019}.    
\end{remark}

As two immediate corollaries of Proposition~\ref{prop:subadditive}, we obtain the following:

\begin{corollary}
Let $0 < \alpha < 1$ and $\beta > 0$. Then the unified-entropy concentratable entanglements satisfies the following superadditivity property:
\begin{eqnarray}
E^{(s)}_{\alpha,\beta}(\ket{\psi})
    &\geq&\nonumber E^{(s\cap A)}_{\alpha,\beta}(\ket{\psi}_A)+E^{(s\cap B)}_{\alpha,\beta}(\ket{\psi}_B).
\end{eqnarray}
Conversely, for $\alpha > 1$ and $\beta > 0$, the following subadditivity property holds:
\begin{eqnarray}
E^{(s)}_{\alpha,\beta}(\ket{\psi})
    &\leq&\nonumber E^{(s\cap A)}_{\alpha,\beta}(\ket{\psi}_A)+E^{(s\cap B)}_{\alpha,\beta}(\ket{\psi}_B).
\end{eqnarray}
\end{corollary}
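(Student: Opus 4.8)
The plan is to read off both inequalities directly from the exact identity established in Proposition~\ref{prop:subadditive}. For $\alpha>0$, $\alpha\neq1$, $\beta>0$ that proposition gives
\begin{equation}
E^{(s)}_{\alpha,\beta}(\ket{\psi})
= E^{(s\cap A)}_{\alpha,\beta}(\ket{\psi}_A)+E^{(s\cap B)}_{\alpha,\beta}(\ket{\psi}_B)
+ (1-\alpha)\beta\, E^{(s\cap A)}_{\alpha,\beta}(\ket{\psi}_A)\,E^{(s\cap B)}_{\alpha,\beta}(\ket{\psi}_B),
\end{equation}
so the deviation from additivity is governed entirely by the sign of the last term, which in turn is the product of the sign of $(1-\alpha)\beta$ with the sign of $E^{(s\cap A)}_{\alpha,\beta}(\ket{\psi}_A)\,E^{(s\cap B)}_{\alpha,\beta}(\ket{\psi}_B)$.

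First I would record that each factor on the right-hand side is non-negative. Since the unified entropy satisfies $S_{\alpha,\beta}(\cdot)\geq0$ for all $\alpha,\beta>0$, the quantity $E^{(s\cap X)}_{\alpha,\beta}(\ket{\psi}_X)$, being an average over $\mathcal{P}(s\cap X)$ of such entropies (equivalently, by Theorem~\ref{the:vNCE}(i)), is non-negative for $X=A,B$; hence $E^{(s\cap A)}_{\alpha,\beta}(\ket{\psi}_A)\,E^{(s\cap B)}_{\alpha,\beta}(\ket{\psi}_B)\geq0$. Then I would split on the sign of the coefficient $(1-\alpha)\beta$ with $\beta>0$ fixed: when $0<\alpha<1$ one has $(1-\alpha)\beta>0$, so the cross term is non-negative and the identity yields $E^{(s)}_{\alpha,\beta}(\ket{\psi})\geq E^{(s\cap A)}_{\alpha,\beta}(\ket{\psi}_A)+E^{(s\cap B)}_{\alpha,\beta}(\ket{\psi}_B)$, i.e.\ superadditivity; when $\alpha>1$ one has $(1-\alpha)\beta<0$, so the cross term is non-positive and the identity yields the reversed inequality, i.e.\ subadditivity. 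This completes the argument.

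There is no genuine obstacle here: the corollary is a one-line sign analysis of the product term appearing in the preceding identity. The only point requiring (trivial) care is the non-negativity of the two entanglement factors $E^{(s\cap A)}_{\alpha,\beta}(\ket{\psi}_A)$ and $E^{(s\cap B)}_{\alpha,\beta}(\ket{\psi}_B)$, which is already guaranteed by the non-negativity of $S_{\alpha,\beta}$ for positive parameters, so no new estimate is needed.
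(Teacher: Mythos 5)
Your argument is correct and is exactly the route the paper intends: the corollary is stated as an immediate consequence of Proposition~\ref{prop:subadditive}, and your sign analysis of the cross term $(1-\alpha)\beta\, E^{(s\cap A)}_{\alpha,\beta}E^{(s\cap B)}_{\alpha,\beta}$, together with the non-negativity of the two entanglement factors, is precisely the intended one-line justification.
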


\begin{corollary}
Let $s = [n]$. For $\alpha = 1$ or $\beta = 0$, the unified-entropy concentratable entanglements is additive under tensor powers, i.e.,
\begin{equation}
E^{(s)}_{\alpha,\beta}(\ket{\psi}^{\otimes k}) = k E^{(s)}_{\alpha,\beta}(\ket{\psi}),
\end{equation}    
for any $k \in \mathbb{N}$.
\end{corollary}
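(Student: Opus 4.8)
The plan is to obtain the corollary by a short induction on $k$, with all the real content supplied by the additivity identity~(\ref{eq:vn-ren-add-2}) of Proposition~\ref{prop:subadditive}. First I would fix the bookkeeping: since $\ket{\psi}$ is an $n$-qudit pure state, $\ket{\psi}^{\otimes k}$ is an $nk$-qudit pure state, and in the statement the symbol $s$ should be read as ``the full index set of the state under consideration'' --- that is, $[nk]$ on the left-hand side and $[n]$ on the right. I would label the $nk$ qudits as $k$ consecutive blocks $A_1,\dots,A_k$, each of size $n$, so that the $i$-th tensor factor of $\ket{\psi}^{\otimes k}$ is supported on $A_i$ and $\ket{\psi}^{\otimes k}=\ket{\psi}_{A_1}\otimes\cdots\otimes\ket{\psi}_{A_k}$, each $\ket{\psi}_{A_i}$ being a relabelled copy of $\ket{\psi}$.

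The base case $k=1$ is immediate. For the inductive step I would set $A=A_1\cup\cdots\cup A_{k-1}$ and $B=A_k$, so that $\ket{\psi}^{\otimes k}=\ket{\psi}^{\otimes(k-1)}_A\otimes\ket{\psi}_B$ is a product across the bipartition $(A,B)$ of the full index set. Since $\alpha=1$ or $\beta=0$, Eq.~(\ref{eq:vn-ren-add-2}) (applied with the full set now being $[nk]$) gives
\[
E^{(s)}_{\alpha,\beta}(\ket{\psi}^{\otimes k})=E^{(A)}_{\alpha,\beta}\big(\ket{\psi}^{\otimes(k-1)}_A\big)+E^{(B)}_{\alpha,\beta}(\ket{\psi}_B).
\]
The induction hypothesis turns the first term into $(k-1)\,E^{([n])}_{\alpha,\beta}(\ket{\psi})$ and the second term into $E^{([n])}_{\alpha,\beta}(\ket{\psi})$, so the sum is $k\,E^{([n])}_{\alpha,\beta}(\ket{\psi})$, closing the induction.

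The only point that needs to be made explicit is that $E^{(s)}_{\alpha,\beta}$ is invariant under relabelling of the qudits: a permutation of the subsystems merely permutes $\mathcal{P}(s)$ and maps each reduced state $\psi_\chi$ to a unitarily equivalent one, and $S_{\alpha,\beta}$ is unitarily invariant; hence $E^{(A_i)}_{\alpha,\beta}(\ket{\psi}_{A_i})=E^{([n])}_{\alpha,\beta}(\ket{\psi})$ for every $i$, which is what licenses collecting the $k$ identical contributions. I do not expect any genuine obstacle here --- everything is driven by Proposition~\ref{prop:subadditive} --- so the ``hard part'' amounts only to stating the relabelling invariance cleanly and keeping the block indices straight through the induction.
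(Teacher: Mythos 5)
Your proof is correct and follows exactly the route the paper intends: the corollary is stated as an immediate consequence of Proposition~\ref{prop:subadditive}, obtained by iterating the additivity identity~(\ref{eq:vn-ren-add-2}) across the $k$ tensor factors, which is precisely your induction. Your explicit handling of the bookkeeping (reading $s$ as the full index set $[nk]$ on the left) and of relabelling invariance of $E^{(s)}_{\alpha,\beta}$ only makes explicit what the paper leaves implicit, and introduces no gap.
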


The above results provide a complete characterization of the conditions under which the unified-entropy concentratable entanglements exhibits superadditivity, subadditivity, or additivity. In particular, the fact that both the von Neumann concentratable entanglements and the R\'{e}nyi concentratable entanglements satisfy additivity (as shown in Eqs.(\ref{eq:vn-re-add-1})-(\ref{eq:vn-ren-add-2})) indicates that these two multipartite entanglement measures possess more desirable properties than the concentratable entanglements $C^{(s)}$~\cite{Schatzki2024-PhysRevResearch.6.023019}.
\begin{proposition}\label{the:RCE-property}
Let $s = [n]$. A three-qudit state $\ket{\psi}_{ABC}$ is genuinely multipartite entangled if
\begin{equation}
E^{(s)}_{\alpha,\beta}(\ket{\psi}_{ABC}) >
\begin{cases}
\displaystyle \frac{1}{2} \log_2 d, & \text{if } \alpha = 1 \text{ or } \beta = 0, \\
\displaystyle \frac{1}{2(1 - \alpha)\beta} \left[d^{(1 - \alpha)\beta} - 1\right], & \text{if } \alpha \neq 1 \text{ and } \beta \neq 0.
\end{cases} 
\end{equation}
\end{proposition}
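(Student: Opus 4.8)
The plan is to prove the contrapositive: every pure three-qudit state $\ket{\psi}_{ABC}$ that is \emph{not} genuinely multipartite entangled—equivalently, one that factorizes across at least one of the bipartitions $A|BC$, $B|AC$, $C|AB$—satisfies $E^{(s)}_{\alpha,\beta}(\ket{\psi}_{ABC})\le \tfrac12 S_{\max}$, where $S_{\max}$ denotes the largest value the unified entropy $S_{\alpha,\beta}$ can take on a single $d$-dimensional system. Since the threshold quoted in the statement is exactly $\tfrac12 S_{\max}$, the proposition follows immediately.

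First I would simplify $E^{(s)}_{\alpha,\beta}(\ket{\psi}_{ABC})$ for $s=[3]$. The power set $\mathcal{P}(\{1,2,3\})$ has eight elements; the terms $\chi=\emptyset$ and $\chi=\{1,2,3\}$ contribute $0$ (the first by the convention in the definition, the second because $\psi$ is globally pure), and by the Schmidt decomposition the reduced states on complementary subsets share the same nonzero spectrum, so $S_{\alpha,\beta}(\psi_{\{i,j\}})=S_{\alpha,\beta}(\psi_{\{k\}})$ whenever $\{i,j,k\}=\{1,2,3\}$. Collecting terms gives
\[
E^{(s)}_{\alpha,\beta}(\ket{\psi}_{ABC})=\tfrac14\left[S_{\alpha,\beta}(\psi_A)+S_{\alpha,\beta}(\psi_B)+S_{\alpha,\beta}(\psi_C)\right].
\]

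Next I would record that $S_{\max}=S_{\alpha,\beta}(I/d)$, attained at the maximally mixed state: the uniform eigenvalue vector $(1/d,\dots,1/d)$ is majorized by every length-$d$ probability vector, so the Schur-concavity of $S_{\alpha,\beta}$ proved in Appendix~\ref{app:schur-concave} (valid for all $\alpha>0$, $\beta\ge0$) gives $S_{\alpha,\beta}(\sigma)\le S_{\alpha,\beta}(I/d)$ for every $d$-dimensional state $\sigma$, and a direct evaluation yields $S_{\max}=\log_2 d$ when $\alpha=1$ or $\beta=0$ and $S_{\max}=\tfrac{1}{(1-\alpha)\beta}\left[d^{(1-\alpha)\beta}-1\right]$ otherwise, i.e. twice the stated threshold in each case. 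Now if $\ket{\psi}_{ABC}$ is biseparable, say $\ket{\psi}=\ket{\phi}_A\otimes\ket{\chi}_{BC}$, then $\psi_A$ is pure so $S_{\alpha,\beta}(\psi_A)=0$, while $S_{\alpha,\beta}(\psi_B)\le S_{\max}$ and $S_{\alpha,\beta}(\psi_C)\le S_{\max}$; hence $E^{(s)}_{\alpha,\beta}(\ket{\psi})\le\tfrac14(0+S_{\max}+S_{\max})=\tfrac12 S_{\max}$, and the same bound holds for a factorization across $B|AC$ or $C|AB$ by relabeling the parties. Therefore any pure state exceeding $\tfrac12 S_{\max}$ fails to be biseparable across every cut and is genuinely multipartite entangled.

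I do not anticipate a serious obstacle; the points requiring care are the Schmidt-symmetry identity $S_{\alpha,\beta}(\psi_{\{i,j\}})=S_{\alpha,\beta}(\psi_{\{k\}})$, which should be written out explicitly, and bookkeeping of multiplicities when collapsing the eight-term sum. I would also remark that, via the convex-roof definition together with the convexity noted in Theorem~\ref{the:vNCE}, the bound $\tfrac12 S_{\max}$ extends to mixed biseparable states, so the same inequality gives a genuine-multipartite-entanglement criterion for mixed three-qudit states as well.
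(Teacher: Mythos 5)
Your proof is correct and takes essentially the same route as the paper's: both reduce $E^{(s)}_{\alpha,\beta}$ for $s=[3]$ to $\tfrac14\left[S_{\alpha,\beta}(\psi_A)+S_{\alpha,\beta}(\psi_B)+S_{\alpha,\beta}(\psi_C)\right]$ via complementary-subset symmetry and observe that a biseparable pure state has one vanishing term while the remaining ones are bounded by the maximal single-qudit unified entropy $S_{\alpha,\beta}(I/d)$, which is exactly twice the stated threshold. The only cosmetic difference is that you bound the surviving terms directly via Schur-concavity (and add a valid mixed-state remark), whereas the paper identifies the biseparable value with the bipartite quantity $E^{(s)}_{\alpha,\beta}(\ket{\phi}_{AB})=\tfrac12 S_{\alpha,\beta}(\phi_A)$ and cites its known maximum.
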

\begin{proof}
It is easy to see that $E^{(s)}=0$ for a product state. Therefore, without loss of generality, we assume \( \ket{\psi}_{ABC} = \ket{\phi}_{AB} \ket{\phi}_C \). The following chain of equalities holds:
\begin{eqnarray}\label{eq:genuninelymultipartite}
    E^{(s)}_{\alpha,\beta}(\ket{\psi}_{ABC}) &=& \nonumber \frac{1}{2^3}\sum_{\chi\in\mathcal{P}(s)}S_{\alpha,\beta}(\psi_{\chi})
\\ &\overset{(a)}{=}& \nonumber
    \frac{1}{4}(S_{\alpha,\beta}(\psi_A)+S_{\alpha,\beta}(\psi_B)+S_{\alpha,\beta}(\psi_C))
\\ &\overset{(b)}{=}& \nonumber
    \frac{1}{4}(S_{\alpha,\beta}(\psi_A)+S_{\alpha,\beta}(\psi_B))      
\\ &\overset{(c)}{=}& \nonumber
    \frac{1}{4}(S_{\alpha,\beta}(\phi_A)+S_{\alpha,\beta}(\phi_B))   
\\ &=& 
    E^{(s)}_{\alpha,\beta}(\ket{\phi}_{AB}),  
\end{eqnarray}
where 
(a) holds follow from $S_{\alpha,\beta}(\psi_{\chi})=S _{\alpha,\beta}(\psi_{\overline{\chi}})$ for any $\chi\in \mathcal{P}(s)$, 
(b) holds follow from $S_{\alpha,\beta}(\psi_C)=0$, 
(c) holds by noting that $\ket{\psi}_{ABC} = \ket{\phi}_{AB} \ket{\phi}_C$. 
Since $E^{(s)}_{\alpha,\beta}(\ket{\phi}_{AB})=\frac{1}{2}S_{\alpha,\beta}(\phi_A)$, the maximal value of  $E^{(s)}_{\alpha,\beta}(\ket{\phi}_{AB})$ is~\cite{hu2006generalized} 
\begin{equation}
E^{(s)}_{\alpha,\beta}(\ket{\phi}_{AB}) =
\begin{cases}
\displaystyle \frac{1}{2} \log_2 d, & \text{if } \alpha =1 \text{ or } \beta =0, \\
\displaystyle \frac{1}{2} \cdot \frac{1}{(1 - \alpha)\beta} \left[d^{(1 - \alpha)\beta} - 1\right], & \text{if } \alpha \neq 1 \text{ and } \beta \neq 0.
\end{cases}
\end{equation}
Thus, we conclude that \( \ket{\psi}_{ABC} \) is genuinely multipartite entangled if 
\begin{equation}
E^{(s)}_{\alpha,\beta}(\ket{\psi}_{ABC}) >
\begin{cases}
\displaystyle \frac{1}{2} \log_2 d, & \text{if } \alpha =1 \text{ or } \beta =0, \\
\displaystyle \frac{1}{2} \cdot \frac{1}{(1 - \alpha)\beta} \left[d^{(1 - \alpha)\beta} - 1\right], & \text{if } \alpha \neq 1 \text{ and } \beta \neq 0.
\end{cases}
\end{equation}
\end{proof}
A special case of Proposition~\ref{the:RCE-property} is obtained when $\alpha = 2$ and $\beta = 1$. In this case, $E^{(s)}_{2,1}(\ket{\phi}_{AB}) = C^{(s)}(\ket{\phi}_{AB}) > \frac{1}{4}$. This result recovers a result presented in Ref.~\cite{schatzki2021entangled}.

\begin{theorem}\label{the:contiunity}
Let \( \ket{\psi} \) and \( \ket{\phi} \) be two pure states such that their trace distance satisfies \( D(\psi, \phi) := \frac{1}{2}\|\psi - \phi\|_1 \leq \epsilon < \frac{1}{2} \), where \( D(\cdot, \cdot) \) denotes the trace distance. Then, for parameters \( \alpha > 1 \) and \( \beta \geq 1 \), we have the following continuity bound:
\begin{equation}\label{eq:contin_1}
\left|E^{(s)}_{\alpha,\beta}(\ket{\psi}) - E^{(s)}_{\alpha,\beta}(\ket{\phi})\right| 
\leq \frac{2\alpha\epsilon}{\alpha-1}.
\end{equation}

In the limit \( \alpha \to 1 \), the unified-entropy concentratable entanglements \( E^{(s)} \) converges to the von Neumann concentratable entanglements \(E^{(s)}(\rho):=\lim_{\alpha\to1}E^{(s)}_{\alpha,\beta}(\rho)\), which satisfies the continuity bound:
\begin{equation}\label{eq:contin_2}
\left|E^{(s)}(\ket{\psi}) - E^{(s)}(\ket{\phi})\right| 
\leq \epsilon \log_{2}(d-1) + h(\epsilon),
\end{equation}
where \( h(\epsilon) = -\epsilon \log_{2} \epsilon - (1 - \epsilon) \log(1 - \epsilon) \) is the binary entropy function, and \( d \) denotes the Hilbert space dimension associated with the states \( \psi \) and \( \phi \).  
\end{theorem}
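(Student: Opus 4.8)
The plan is to reduce both inequalities to a continuity estimate for the single-partition unified entropy $S_{\alpha,\beta}$ applied to reduced states, and then average over $\mathcal{P}(s)$. Since $\ket{\psi}$ and $\ket{\phi}$ are pure, the explicit formula $E^{(s)}_{\alpha,\beta}(\ket{\psi})=\frac{1}{2^{|s|}}\sum_{\chi\in\mathcal{P}(s)}S_{\alpha,\beta}(\psi_{\chi})$ applies, so by the triangle inequality
\[
\bigl|E^{(s)}_{\alpha,\beta}(\ket{\psi})-E^{(s)}_{\alpha,\beta}(\ket{\phi})\bigr|\;\le\;\frac{1}{2^{|s|}}\sum_{\chi\in\mathcal{P}(s)}\bigl|S_{\alpha,\beta}(\psi_{\chi})-S_{\alpha,\beta}(\phi_{\chi})\bigr|.
\]
The partial trace is CPTP, so trace distance is contractive under it and $D(\psi_{\chi},\phi_{\chi})\le D(\psi,\phi)\le\epsilon$ for every $\chi$; moreover each $\psi_{\chi}$ (and $\phi_{\chi}$) lives on a Hilbert space of dimension at most $d$. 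Thus it is enough to prove the stated bound for $|S_{\alpha,\beta}(\sigma)-S_{\alpha,\beta}(\tau)|$ with $\sigma,\tau$ arbitrary states satisfying $D(\sigma,\tau)\le\epsilon$, after which the same constant survives the convex average over $\mathcal{P}(s)$.

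For the regime $\alpha>1$, $\beta\ge1$ I would write $S_{\alpha,\beta}(\rho)=\frac{1}{(1-\alpha)\beta}\bigl(\|\rho\|_{\alpha}^{\alpha\beta}-1\bigr)$ using the Schatten norm $\|\rho\|_{\alpha}=(\Tr\rho^{\alpha})^{1/\alpha}$, and proceed in three elementary steps. (1) For a density operator and $\alpha\ge1$ one has $\|\rho\|_{\alpha}\le1$, and by monotonicity of Schatten norms ($\|\cdot\|_{\alpha}\le\|\cdot\|_{1}$ for $\alpha\ge1$) together with the reverse triangle inequality, $\bigl|\|\sigma\|_{\alpha}-\|\tau\|_{\alpha}\bigr|\le\|\sigma-\tau\|_{\alpha}\le\|\sigma-\tau\|_{1}=2D(\sigma,\tau)\le2\epsilon$. (2) On $[0,1]$ the map $t\mapsto t^{\alpha\beta}$ has Lipschitz constant $\alpha\beta$ because its derivative $\alpha\beta\,t^{\alpha\beta-1}$ is bounded by $\alpha\beta$ when $\alpha\beta\ge1$; hence $\bigl|\|\sigma\|_{\alpha}^{\alpha\beta}-\|\tau\|_{\alpha}^{\alpha\beta}\bigr|\le2\alpha\beta\epsilon$. (3) Dividing by $(\alpha-1)\beta>0$ gives $|S_{\alpha,\beta}(\sigma)-S_{\alpha,\beta}(\tau)|\le\frac{2\alpha\beta\epsilon}{(\alpha-1)\beta}=\frac{2\alpha\epsilon}{\alpha-1}$, which is precisely \eqref{eq:contin_1} once inserted into the average above. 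An equivalent route bounds $|\Tr\sigma^{\alpha}-\Tr\tau^{\alpha}|$ directly from the fact that the descending eigenvalue vectors obey $\sum_{i}|\lambda_{i}^{\downarrow}-\mu_{i}^{\downarrow}|\le\|\sigma-\tau\|_{1}$ together with the scalar estimate $|x^{\gamma}-y^{\gamma}|\le\gamma|x-y|$ on $[0,1]$ for $\gamma\ge1$, applied with $\gamma=\alpha$ and then with $\gamma=\beta$.

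For the limit $\alpha\to1$, $E^{(s)}$ is the average of von Neumann entropies $S(\psi_{\chi})$, so the needed single-partition estimate is the Fannes--Audenaert inequality: if $D(\sigma,\tau)\le\epsilon\le1-1/m$ for states on an $m$-dimensional space, then $|S(\sigma)-S(\tau)|\le\epsilon\log_{2}(m-1)+h(\epsilon)$. Applying it to each pair $(\psi_{\chi},\phi_{\chi})$ with $m=\dim\mathcal{H}_{\chi}\le d$, and using $\epsilon<\tfrac12\le1-1/m$ and the monotonicity of $\log_{2}(\cdot-1)$ in $m$, gives $|S(\psi_{\chi})-S(\phi_{\chi})|\le\epsilon\log_{2}(d-1)+h(\epsilon)$ for every $\chi$; averaging over $\mathcal{P}(s)$ yields \eqref{eq:contin_2}.

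The only genuinely nontrivial ingredient is the unified-entropy continuity estimate of the third paragraph; everything else is bookkeeping (triangle inequality, contractivity of trace distance under partial trace, convex averaging, a dimension comparison, and quoting Fannes--Audenaert). Within that estimate the points requiring care are the correct direction of Schatten-norm monotonicity---$\|\cdot\|_{\alpha}\le\|\cdot\|_{1}$ holds only for $\alpha\ge1$---and the observation that the hypothesis $\alpha>1$, $\beta\ge1$ is exactly what guarantees $\alpha\beta\ge1$, so that $t\mapsto t^{\alpha\beta}$ is Lipschitz on all of $[0,1]$; for $\alpha\beta<1$ the derivative diverges at $t=0$ and this clean linear-in-$\epsilon$ bound would no longer hold.
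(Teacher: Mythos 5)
Your proposal is correct and follows the same skeleton as the paper's proof: expand $E^{(s)}_{\alpha,\beta}$ over $\mathcal{P}(s)$, use the triangle inequality, contract the trace distance under partial trace, and invoke a single-subsystem continuity bound for each $\chi$ (Fannes--Audenaert for the $\alpha\to1$ case, exactly as in the paper). The one genuine difference is in how the key estimate $\left|S_{\alpha,\beta}(\sigma)-S_{\alpha,\beta}(\tau)\right|\leq\frac{2\alpha}{\alpha-1}\,D(\sigma,\tau)$ for $\alpha>1$, $\beta\geq1$ is obtained: the paper simply cites it from Ref.~\cite{hu2006generalized}, whereas you derive it from scratch by writing $S_{\alpha,\beta}(\rho)=\frac{1}{(1-\alpha)\beta}\bigl(\|\rho\|_{\alpha}^{\alpha\beta}-1\bigr)$, combining the reverse triangle inequality with the Schatten-norm monotonicity $\|\cdot\|_{\alpha}\leq\|\cdot\|_{1}$ (valid since $\alpha\geq1$), and then the Lipschitz bound for $t\mapsto t^{\alpha\beta}$ on $[0,1]$ (valid since $\alpha\beta\geq1$); the factor $\alpha\beta$ cancels the $\beta$ in the prefactor, giving exactly $\frac{2\alpha\epsilon}{\alpha-1}$. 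Your derivation is correct (as is the alternative eigenvalue route you sketch), and what it buys is self-containedness and a transparent explanation of why the hypothesis $\alpha>1$, $\beta\geq1$ is needed, at the cost of a slightly longer argument than the paper's one-line citation.
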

\begin{proof}
First, we will prove Eq.~(\ref{eq:contin_1}). Let $\psi_{\chi}$ and $\phi_{\chi}$ be the reduced quantum states in the subsystems associated with \(\chi\). Then, we have
\begin{eqnarray}
|E^{(s)}_{\alpha,\beta}(\ket{\psi}) - E^{(s)}_{\alpha,\beta}(\ket{\phi})| 
&=&\nonumber \frac{1}{2^{|s|}}|\sum_{\chi\in\mathcal{P}(s)}S_{\alpha,\beta}(\psi_{\chi})-\sum_{\chi\in\mathcal{P}(s)}S_{\alpha,\beta}(\phi_{\chi})| 
\\\nonumber &=&   
\frac{1}{2^{|s|}}|\sum_{\chi\in\mathcal{P}(s)}(S_{\alpha,\beta}(\psi_{\chi})-S_{\alpha,\beta}(\phi_{\chi}))| 
\\\nonumber &\overset{(a)}{\leq}&   
\frac{1}{2^{|s|}}\sum_{\chi\in\mathcal{P}(s)}|S_{\alpha,\beta}(\psi_{\chi})-S_{\alpha,\beta}(\phi_{\chi})| 
\\\nonumber &\overset{(b)}{\leq}&   
\frac{1}{2^{|s|}}\sum_{\chi\in\mathcal{P}(s)} \frac{2\alpha}{\alpha-1}  D(\psi_{\chi},\phi_{\chi})
\\\nonumber &\overset{(c)}{\leq}&   
\frac{1}{2^{|s|}}\sum_{\chi\in\mathcal{P}(s)} \frac{2\alpha}{\alpha-1}  D(\psi,\phi)
\\ &\overset{}{\leq}&   
\frac{2\alpha\epsilon}{\alpha-1}, 
\end{eqnarray}    
where (a) follows from the triangle inequality, (b) follows from \(|S_{\alpha,\beta}(\psi_{\chi})-S_{\alpha,\beta}(\phi_{\chi})| \leq \frac{2\alpha}{\alpha-1}  D(\psi_{\chi},\phi_{\chi})\) holds for parameters \( \alpha > 1 \) and \( \beta \geq 1 \)~\cite{hu2006generalized}, (c) holds follows from the trace distance $D(\rho,\sigma)$ for two density operators is non-increasing under the partial trace operation.

Second, we will prove Eq.(\ref{eq:contin_2}). Let $d_{\chi}$ be the dimension of the systems associated with the reduced quantum states $\psi_{\chi}$ and $\phi_{\chi}$, the following chains of inequalities holds
\begin{widetext}
\begin{eqnarray}
|E^{(s)}(\ket{\psi}) - E^{(s)}(\ket{\phi})| &=&\nonumber \frac{1}{2^{|s|}}|\sum_{\chi\in\mathcal{P}(s)}S(\psi_{\chi})-\sum_{\chi\in\mathcal{P}(s)}S(\phi_{\chi})| 
\\\nonumber &=&   
\frac{1}{2^{|s|}}|\sum_{\chi\in\mathcal{P}(s)}(S(\psi_{\chi})-S(\phi_{\chi}))| 
\\\nonumber &\overset{}{\leq}&   
\frac{1}{2^{|s|}}\sum_{\chi\in\mathcal{P}(s)}|S(\psi_{\chi})-S(\phi_{\chi})| 
\\\nonumber &\overset{(a)}{\leq}&   
\frac{1}{2^{|s|}}\sum_{\chi\in\mathcal{P}(s)}\left[ D(\psi_{\chi},\phi_{\chi})\log_{2}(d_{\chi}-1)+h(D(\psi_{\chi},\phi_{\chi})) \right]
\\\nonumber &\overset{(b)}{\leq}&   
\frac{1}{2^{|s|}}\sum_{\chi\in\mathcal{P}(s)}
\left[ D(\psi,\phi)\log_{2}(d-1)+h(D(\psi,\phi)) \right]
\\\nonumber &=&   
D(\psi,\phi)\log_{2}(d-1)+h(D(\psi,\phi)) 
\\ &\overset{(b)}{\leq}&   
\epsilon\log_{2}(d-1)+h(\epsilon), 
\end{eqnarray}    
\end{widetext}
where (a) follows from the Fannes–Audenaert inequality $|S(\psi_{\chi})-S(\phi_{\chi})|\leq D(\psi_{\chi},\phi_{\chi})\log(d_{\chi}-1)+h(D(\psi_{\chi},\phi_{\chi}))$~\cite{fannes1973continuity,Nielsen2010,audenaert2007sharp-vonNUMANN}, (b) holds follows from the trace distance $D(\rho,\sigma)$ for two density operators is non-increasing under the partial trace operation and $d_{\chi}\leq d$, (c) holds because $h(x)$ is increasing function for $0\leq x\leq \frac{1}{2}$. 
\end{proof}

These continuity properties imply that small errors in estimating a quantum pure state will not lead to large fluctuations in the unified-entropy concentratable entanglements. This robustness is crucial in practical scenarios where one aims to measure multipartite entanglement in a quantum state.

\section{Ordering relations of unified-entropy concentratable entanglements}\label{sec:lowerbound}

In this section, we discuss the ordering relations of unified-entropy concentratable entanglements and how these relations can be estimated on near-term quantum devices.

We begin by formally introducing the definition of concentratable entanglements. For an \(n\)-qubit pure state \(\ket{\psi}\) and a subset \(s \subseteq [n]\) of the qubits, the concentratable entanglements \(C^{(s)}\) is defined as~\cite{beckey2021computable}:
\begin{equation}
C^{(s)}(\ket{\psi}) = 1 - \frac{1}{2^{|s|}} \sum_{\chi \in \mathcal{P}(s)} \mathrm{Tr}\, (\psi_\chi^2),
\end{equation}
where \(\mathcal{P}(s)\) denotes the power set of \(s\), and \(\psi_\chi\) is the reduced density matrix on the subsystem \(\chi\).
For a $n$-qubit mixed state \(\rho\), the concentratable entanglements is defined as
\begin{equation}
C^{(s)}(\rho) =\min\sum_ip_iC^{(s)}(\ket{\psi_i}) ,
\end{equation}
where the minimum is taken over all the pure state decomposition ${\{p_i,\psi_i\}}$ of $\rho$.

As previously discussed, this quantity corresponds to a special case of the unified-entropy concentratable entanglements evaluated at \(\alpha = 2\) and \(\beta = 1\).  

The R\'{e}nyi, unified-entropy, von Neumann, and concentratable entanglements satisfy the following (ordering) relation:

\begin{theorem}\label{the:lowerbound}
Let $\ket{\psi}$ be an arbitrary $n$-qudit pure state. 
For any orders $\alpha \geq \alpha' > 0$ and $\beta \geq 1$, the R\'{e}nyi concentratable entanglements $\mathcal{R}_{\alpha}^{(s)}$ and the unified-entropy concentratable entanglements $E^{(s)}_{\alpha',\beta}$ satisfy
\begin{equation}\label{eq:lowerbound-1}
   \mathcal{R}_{\alpha'}^{(s)}(\ket{\psi}) \geq \mathcal{R}_{\alpha}^{(s)}(\ket{\psi}),
\end{equation}
and
\begin{equation}\label{eq:lowerbound-2}
   E^{(s)}_{\alpha',\beta}(\ket{\psi}) \geq E^{(s)}_{\alpha,\beta}(\ket{\psi}).
\end{equation}
In particular, the von Neumann concentratable entanglements $E^{(s)}$ obeys
\begin{equation}\label{eq:lowerbound-3}
   E^{(s)}(\ket{\psi}) \geq \max \Bigl\{ \tfrac{1}{\ln 2}\, C^{(s)}(\ket{\psi}),\, 2 C^{(s)}(\ket{\psi}) - \tfrac{1}{2} \Bigr\}.
\end{equation}
Moreover, the R\'{e}nyi concentratable entanglements $\mathcal{R}_{\alpha}^{(s)}$ with $\alpha=2$ satisfies
\begin{equation}\label{eq:lowerbound-renyi-2}
   \mathcal{R}_{2}^{(s)}(\ket{\psi}) \geq \tfrac{1}{\ln 2}\, C^{(s)}(\ket{\psi}).
\end{equation}
\end{theorem}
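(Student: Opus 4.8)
The plan is to reduce each of the four inequalities to a one–variable estimate on the spectrum of a single reduced state $\psi_\chi$ and then average over $\chi\in\mathcal{P}(s)$, since all four quantities are of the common form $2^{-|s|}\sum_{\chi\in\mathcal{P}(s)}(\,\cdot\,)$. For Eqs.~(\ref{eq:lowerbound-1}) and (\ref{eq:lowerbound-2}) it therefore suffices to show that, for every fixed density operator $\rho$ with eigenvalues $\lambda_i$, the map $\alpha\mapsto R_\alpha(\rho)$ is non-negative and non-increasing, and likewise $\alpha\mapsto S_{\alpha,\beta}(\rho)$ for $\beta\ge1$. The single analytic input I would isolate is that $f(\alpha):=\ln\Tr(\rho^\alpha)=\ln\sum_i\lambda_i^{\alpha}$ is \emph{convex} in $\alpha$ (each $\alpha\mapsto\lambda_i^{\alpha}$ is log-linear, hence log-convex, and a sum of log-convex functions is log-convex by Hölder), with $f(1)=0$ and $f'(\alpha)\le0$ because $\lambda_i\le1$. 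Writing $R_\alpha(\rho)=f(\alpha)\big/\bigl((1-\alpha)\ln 2\bigr)$ and $u(\alpha):=f(\alpha)+(1-\alpha)f'(\alpha)$, one computes $\frac{d}{d\alpha}\frac{f(\alpha)}{1-\alpha}=\frac{u(\alpha)}{(1-\alpha)^2}$ and $u'(\alpha)=(1-\alpha)f''(\alpha)$; since $u(1)=0$ and $f''\ge0$, the function $u$ attains its maximum $0$ at $\alpha=1$, so $u\le0$ and $R_\alpha(\rho)$ is non-increasing. Non-negativity follows from $\Tr\rho^{\alpha}\ge1$ for $\alpha<1$ and $\le1$ for $\alpha>1$. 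Summing over $\chi$ gives Eq.~(\ref{eq:lowerbound-1}).

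For the unified entropy I would use the factorization
\[
S_{\alpha,\beta}(\rho)=(\ln 2)\,R_\alpha(\rho)\,g\bigl(\beta f(\alpha)\bigr),\qquad g(t):=\frac{e^{t}-1}{t},
\]
with $g(0):=1$, which is immediate from $(\Tr\rho^{\alpha})^{\beta}=e^{\beta f(\alpha)}$. The scalar function $g$ is strictly positive and strictly increasing on $\RR$ (its derivative equals $\bigl(e^{t}(t-1)+1\bigr)/t^{2}>0$), so, as $\beta>0$ and $f$ is non-increasing, $g(\beta f(\alpha))$ is positive and non-increasing in $\alpha$; multiplying by the non-negative non-increasing factor $(\ln 2)R_\alpha(\rho)$ keeps the product non-increasing. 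Summing over $\chi$ yields Eq.~(\ref{eq:lowerbound-2}) (and in fact the argument only uses $\beta>0$).

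For the comparisons with $C^{(s)}$, set $y_\chi:=\Tr(\psi_\chi^{2})\in(0,1]$, so that $C^{(s)}(\ket{\psi})=1-2^{-|s|}\sum_\chi y_\chi$ and $\mathcal{R}^{(s)}_{2}(\ket{\psi})=2^{-|s|}\sum_\chi(-\log_2 y_\chi)$. Eq.~(\ref{eq:lowerbound-renyi-2}) is then the termwise inequality $-\log_2 y_\chi=\frac{-\ln y_\chi}{\ln 2}\ge\frac{1-y_\chi}{\ln 2}$ (i.e.\ $-\ln y\ge 1-y$), summed. For Eq.~(\ref{eq:lowerbound-3}) I would first note $E^{(s)}=\mathcal{R}^{(s)}_{1}\ge\mathcal{R}^{(s)}_{2}$ by the monotonicity just established (applied with $\alpha'=1\le\alpha=2$, using continuity at $\alpha=1$), which already gives the term $\tfrac1{\ln 2}C^{(s)}$. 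For the term $2C^{(s)}-\tfrac12$, apply Jensen's inequality to the convex function $x\mapsto-\log_2 x$ to get $\mathcal{R}^{(s)}_{2}(\ket{\psi})\ge-\log_2\bigl(2^{-|s|}\sum_\chi y_\chi\bigr)=-\log_2\bigl(1-C^{(s)}(\ket{\psi})\bigr)$, and conclude with the elementary scalar bound $-\log_2(1-c)\ge 2c-\tfrac12$ valid on $[0,1)$ (for $c\le0.89$ even the cruder $-\log_2(1-c)\ge c/\ln 2$ suffices, while for $c$ near $1$ the left side diverges and the right side stays below $\tfrac32$).

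The only genuinely delicate point is the monotonicity of $S_{\alpha,\beta}$ in $\alpha$: the factorization $S_{\alpha,\beta}=(\ln2)R_\alpha\,g(\beta f)$ is what makes it transparent, but one still has to justify each ingredient separately — log-convexity of $\alpha\mapsto\Tr\rho^{\alpha}$, the positivity and strict monotonicity of $g$, and the fact that a product of non-negative non-increasing functions is non-increasing — and to dispatch the degenerate case in which $\rho$ is pure (then $f\equiv0$ and all the quantities involved vanish identically, so every inequality is trivial). Once that step is in place, the rest of the theorem is a short sequence of scalar estimates together with one application of Jensen's inequality, all summed over $\mathcal{P}(s)$.
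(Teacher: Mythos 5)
Your proposal is correct, and at the key steps it takes a genuinely different route from the paper. For Eq.~(\ref{eq:lowerbound-1}) the paper simply reduces to the per-term statement $R_{\alpha'}(\psi_\chi)\ge R_{\alpha}(\psi_\chi)$ and cites the literature, whereas you prove it from scratch via log-convexity of $\alpha\mapsto\Tr\rho^{\alpha}$ and the auxiliary function $u=f+(1-\alpha)f'$; that is a sound, self-contained substitute. For Eq.~(\ref{eq:lowerbound-2}) the paper's Appendix works with $f(\alpha)=\Tr\rho^{\alpha}$ directly, establishes convexity of $[f(\alpha)]^{\beta}$ (which genuinely uses $\beta\ge1$) and applies the tangent-line condition at $\alpha=1$; your factorization $S_{\alpha,\beta}=(\ln 2)\,R_{\alpha}\,g\bigl(\beta\ln\Tr\rho^{\alpha}\bigr)$ with $g(t)=(e^{t}-1)/t$ increasing reduces monotonicity in $\alpha$ to ``a product of two nonnegative non-increasing functions,'' and, as you note, only needs $\beta>0$ — so it covers the stated range and is slightly more general than the paper's argument. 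For the bound $E^{(s)}\ge 2C^{(s)}-\tfrac12$ the paper proves the per-term inequality $S(\psi_\chi)\ge 2\bigl(1-\Tr\psi_\chi^{2}\bigr)-\tfrac12$ by truncating the Taylor expansion of the entropy, while you work at the level of the power-set average: Jensen applied to $-\log_2$ gives $\mathcal{R}_2^{(s)}\ge-\log_2\bigl(1-C^{(s)}\bigr)$, and the scalar bound $-\log_2(1-c)\ge 2c-\tfrac12$ on $[0,1)$ (true, with minimum gap $\approx 0.41$ at $c=1-1/(2\ln 2)$, a one-line calculus check you should spell out rather than assert) finishes it; this route even yields the slightly stronger intermediate statement $E^{(s)}\ge-\log_2(1-C^{(s)})$. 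The remaining pieces coincide in substance with the paper: Eq.~(\ref{eq:lowerbound-renyi-2}) is the same $\ln x\le x-1$ estimate, and the first half of Eq.~(\ref{eq:lowerbound-3}) you obtain by chaining $E^{(s)}\ge\mathcal{R}_2^{(s)}\ge\tfrac1{\ln2}C^{(s)}$ where the paper argues per term with the same elementary inequality. In short, the paper's proof is a collection of local (per-$\chi$) estimates plus one cited fact; yours is fully self-contained, extends the $\alpha$-monotonicity to all $\beta>0$, and trades the Taylor-series step for a Jensen argument — both are valid.
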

\begin{proof}
We first prove that Eq.~\eqref{eq:lowerbound-1} holds. 
Recall that 
\(\mathcal{R}^{(s)}_{\alpha}(\ket{\psi}) = \tfrac{1}{2^{|s|}} \sum_{\chi \in \mathcal{P}(s)} \mathcal{R}_{\alpha}(\psi_{\chi})\),
where \(\psi_{\chi}\) denotes the joint reduced state of the subsystems corresponding to the elements in \(\chi\). 
It is then sufficient to show that, for \(\alpha \geq \alpha' > 0\),
\begin{equation}\label{eq:renyi-order}
R_{\alpha'}(\psi_{\chi}) \geq R_{\alpha}(\psi_{\chi}),
\end{equation}
which has already been established in Ref.~\cite{muller2013quantum}. 
Therefore, Eq.~\eqref{eq:lowerbound-1} follows immediately.

We now turn to Eq.~(\ref{eq:lowerbound-2}). 
Analogous to the proof of Eq.~(\ref{eq:lowerbound-1}), it suffices to show that, for \(\alpha \geq \alpha' > 0\) and \(\beta\geq1\),
\begin{equation}\label{eq:Tsallis-order} 
 S_{\alpha,\beta}(\psi_{\chi}) \geq  S_{\alpha,\beta}(\psi_{\chi}).
\end{equation}
A detailed proof of Eq.~(\ref{eq:Tsallis-order}) is given in Appendix~\ref{app:proof-Tsallis-order}.

Then, we will prove Eq.~\eqref{eq:lowerbound-3}. 
We first establish that
\begin{equation}\label{eq:vN-linear-order}
   E^{(s)}(\ket{\psi}) \geq \frac{1}{\ln 2}\, C^{(s)}(\ket{\psi}).
\end{equation}
To prove Eq.~(\ref{eq:vN-linear-order}), it suffices to show that
\begin{equation}
S(\psi_\chi) \geq \tfrac{1}{\ln 2}\, S_{\mathrm{lin}}(\psi_\chi)
\;\;\Longrightarrow\;\;
-\sum_i \lambda_i \ln \lambda_i \;\geq\; 1 - \sum_i \lambda_i^2,
\end{equation}
where \(\{\lambda_i\}\) are the eigenvalues of \(\psi_\chi\). 
This follows directly from the elementary inequality \(\ln x \leq x-1\) valid for all \(x>0\).

Next, we show that 
\begin{equation}\label{eq:vN-linear-odered-2}
   E^{(s)}(\ket{\psi}) \geq 2 C^{(s)}(\ket{\psi}) - \tfrac{1}{2}.
\end{equation}
First, we establish that for any reduced state \(\psi_\chi\) of \(\ket{\psi}\), the following inequality holds: \(S(\psi_\chi) \geq 2(1 - \Tr \psi_\chi^2) - \frac{1}{2}\). The proof is outlined as follows:  
\begin{eqnarray}
S(\psi_\chi) &=&\nonumber  \sum_{k=1}^{\infty}\frac{1}{k}\Tr(\rho(I-\psi_\chi)^k)
\\&\geq&\nonumber
  \Tr(\rho(I-\psi_\chi))+\frac{1}{2}\Tr(\rho(I-\psi_\chi)^k)
\\&\geq&\nonumber
  \frac{3}{2}-2\Tr\psi_\chi^2= 2(1-\Tr\psi_\chi^2)-\frac{1}{2},
\end{eqnarray}
where the first equality follows from the Taylor series expansion of the von Neumann entropy and the first inequality holds because \(\Tr(\rho(I - \psi_\chi)^k) \geq 0\) for any \(k \in \mathbb{N}^+\).

Second, we have that 
\begin{eqnarray}
    E^{(s)}(\psi) &=& \nonumber \frac{1}{2^{|s|}} \sum_{\chi\in\mathcal{P}(s)}S(\psi_\chi)
\\&\geq&\nonumber
   \frac{1}{2^{|s|}} \sum_{\chi\in\mathcal{P}(s)}[2(1-\Tr\psi_\chi^2)-\frac{1}{2}] 
\\&=&\nonumber
  2\left(1- \frac{1}{2^{|s|}}\sum_{\chi\in\mathcal{P}(s)}\Tr\psi_\chi^2\right)-\frac{1}{2}   
\\&=&\nonumber
  2C^{(s)}(\ket{\psi})-\frac{1}{2}.    
\end{eqnarray}

Finally, we establish Eq.~(\ref{eq:lowerbound-renyi-2}), which follows from the inquality 
\(\ln x\leq x-1\) valid for all \(x>0\).

This completes the proof.
\end{proof}

\begin{remark}
Theorem~\ref{the:lowerbound} characterizes the ordering relations among unified-entropy concentratable entanglements under different parameter choices. For instance, in Eq.~(\ref{eq:lowerbound-1}), setting $\alpha'=1$ and $\alpha=2$ yields a relation between the von Neumann concentratable entanglement \(E^{(s)}\) and the R\'{e}nyi concentratable entanglement \(\mathcal{R}_{\alpha}^{(s)}\) for \(\alpha=2\):
\begin{equation}\label{eq:eq41}
E^{(s)}(\ket{\psi}) \geq \frac{1}{\ln 2}\mathcal{R}_2^{(s)}(\ket{\psi}) > \mathcal{R}_2^{(s)}(\ket{\psi}).
\end{equation}
Similarly, when $\beta=1$, Eq.~(\ref{eq:lowerbound-2}) reduces to the case of Tsallis concentratable entanglement. i.e., 
\begin{equation}\label{eq:lowerbound-4-pure}
   \mathcal{T}_{\alpha'}^{(s)}(\ket{\psi}) \geq \mathcal{T}_{\alpha}^{(s)}(\ket{\psi}). 
\end{equation}
Furthermore, by choosing $\alpha'=2$ and $\alpha=3$, we obtain a relation between the concentratable entanglement \(C^{(s)}\) and the Tsallis concentratable entanglement \(\mathcal{T}_{\alpha}^{(s)}\) for \(\alpha=3\):
\begin{equation}\label{eq:45}
C^{(s)}(\ket{\psi}) \geq \mathcal{T}_3^{(s)}(\ket{\psi}).
\end{equation}
Finally, combining Eq.~(\ref{eq:lowerbound-renyi-2}) with Eq.~(\ref{eq:eq41}) and Eq.~(\ref{eq:45}), we further obtain
\begin{equation}\label{eq:lowerbound-renyi-2'}
E^{(s)}(\ket{\psi})>\mathcal{R}_{2}^{(s)}(\ket{\psi}) > C^{(s)}(\ket{\psi}) \geq \mathcal{T}_3^{(s)}(\ket{\psi}).
\end{equation}
These relations are further illustrated in Section~\ref{sec:examples} through explicit examples. 
\end{remark}

Theorem \ref{the:lowerbound} can be extended to the case of mixed states as well:
\begin{corollary}\label{coro:vn-mixed-lowerbound}
Let $\rho$ be an arbitrary $n$-qudit mixed state. 
For any orders $\alpha\geq\alpha'>0$ and $\beta\geq1$ the R\'{e}nyi concentratable entanglements $\mathcal{R}_{\alpha}^{(s)}$ and the unified-entropy concentratable entanglements $E^{(s)}_{\alpha',\beta}$ satisfy 
\begin{equation}\label{eq:lowerbound-1-1}
   \mathcal{R}_{\alpha'}^{(s)}(\rho) \geq \mathcal{R}_{\alpha}^{(s)}(\rho),
\end{equation}
and
\begin{equation}\label{eq:lowerbound-2-1}
E^{(s)}_{\alpha',\beta}(\rho) \geq E^{(s)}_{\alpha,\beta}(\rho) 
\end{equation}    
In particular, the von Neumann concentratable entanglements $E^{(s)}$ obeys
\begin{equation}\label{eq:lowerbound-3-1}
   E^{(s)}(\rho) \geq \max \Bigl\{ \frac{1}{\ln2} C^{(s)}(\rho),\, 2 C^{(s)}(\rho)-\tfrac{1}{2} \Bigr\}.
\end{equation}
\end{corollary}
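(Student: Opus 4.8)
The plan is to bootstrap the mixed-state inequalities from their pure-state counterparts in Theorem~\ref{the:lowerbound}, using nothing more than the convex-roof structure shared by $\mathcal{R}_{\alpha}^{(s)}$, $E^{(s)}_{\alpha,\beta}$, $E^{(s)}$ and $C^{(s)}$. The key elementary fact is a monotonicity principle for convex roofs: if $f$ and $g$ are real functions on pure states with $f(\ket{\psi})\geq g(\ket{\psi})$ for all $\ket{\psi}$, and $\widehat f,\widehat g$ denote their convex-roof extensions to mixed states, then $\widehat f(\rho)\geq\widehat g(\rho)$ for every $\rho$. To see this, fix a decomposition $\rho=\sum_i p_i\psi_i$ attaining $\widehat f(\rho)=\sum_i p_i f(\psi_i)$ (it exists by finite-dimensional compactness together with a Carath\'{e}odory bound on the number of terms, or one simply passes to $\varepsilon$-optimal decompositions); then $\widehat f(\rho)=\sum_i p_i f(\psi_i)\geq\sum_i p_i g(\psi_i)\geq\widehat g(\rho)$, where the last inequality holds because $\{p_i,\psi_i\}$ is a legitimate decomposition of $\rho$ and $\widehat g(\rho)$ is the minimum over all such decompositions.

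Granting this principle, Eq.~(\ref{eq:lowerbound-1-1}) follows by taking $f=\mathcal{R}_{\alpha'}^{(s)}$ and $g=\mathcal{R}_{\alpha}^{(s)}$, which satisfy $f\geq g$ on pure states for $\alpha\geq\alpha'>0$ by Eq.~(\ref{eq:lowerbound-1}); and Eq.~(\ref{eq:lowerbound-2-1}) follows by taking $f=E^{(s)}_{\alpha',\beta}$ and $g=E^{(s)}_{\alpha,\beta}$, which satisfy $f\geq g$ on pure states for $\alpha\geq\alpha'>0$, $\beta\geq1$ by Eq.~(\ref{eq:lowerbound-2}). For Eq.~(\ref{eq:lowerbound-3-1}) I would argue directly: let $\{p_i,\psi_i\}$ be optimal for $E^{(s)}(\rho)$. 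The pure-state bound $E^{(s)}(\ket{\psi})\geq\frac{1}{\ln2}C^{(s)}(\ket{\psi})$ of Eq.~(\ref{eq:vN-linear-order}) gives $E^{(s)}(\rho)=\sum_i p_i E^{(s)}(\psi_i)\geq\frac{1}{\ln2}\sum_i p_i C^{(s)}(\psi_i)\geq\frac{1}{\ln2}C^{(s)}(\rho)$, the last step by minimality of the convex roof $C^{(s)}(\rho)$. Likewise, from $E^{(s)}(\ket{\psi})\geq 2C^{(s)}(\ket{\psi})-\frac12$ in Eq.~(\ref{eq:vN-linear-odered-2}) together with $\sum_i p_i=1$ one gets $E^{(s)}(\rho)\geq 2\sum_i p_i C^{(s)}(\psi_i)-\frac12\geq 2C^{(s)}(\rho)-\frac12$. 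Taking the maximum of the two lower bounds yields Eq.~(\ref{eq:lowerbound-3-1}).

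The only subtlety — and the one place the argument is not purely mechanical — is that the decomposition realizing $E^{(s)}(\rho)$ (or $\mathcal{R}_{\alpha'}^{(s)}(\rho)$, etc.) is in general \emph{not} optimal for the quantity on the right-hand side, so one cannot expect equality at the final step and must genuinely invoke the defining minimization of the right-hand quantity to close each chain. This is precisely why the direction of the inequality is compatible with passing to convex roofs: evaluating the right-hand side's convex roof on a suboptimal decomposition only \emph{over}estimates it, which is in our favour. Beyond this observation the corollary is an immediate consequence of Theorem~\ref{the:lowerbound} and requires no new estimates.
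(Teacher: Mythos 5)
Your proposal is correct and follows essentially the same route as the paper: take an optimal (or near-optimal) decomposition for the left-hand convex-roof quantity, apply the pure-state inequalities of Theorem~\ref{the:lowerbound} termwise, and conclude by the defining minimization of the right-hand convex roof. The paper writes out only the $2C^{(s)}(\rho)-\tfrac12$ bound and notes the other cases are identical, which is precisely the general monotonicity-of-convex-roofs principle you articulate.
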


\begin{proof}
Here, we prove only the second part of Eq.~(\ref{eq:lowerbound-3-1}). The proofs of the remaining inequalities follow in exactly the same way.
Let \(\rho=\sum_ip_i\psi_i\) be the optimal decomposition of \(E^{(s)}(\rho)\). Then, we have
\begin{eqnarray}
    E^{(s)}(\rho) &=& \nonumber \sum_ip_iE^{(s)}(\ket{\psi_i})
\\&\geq&\nonumber
    \sum_{i}p_i[2C^{(s)}(\ket{\psi_i})-\frac{1}{2}] 
\\&=&\nonumber
   2\sum_{i}p_iC^{(s)}(\ket{\psi_i})-\frac{1}{2}
\\&\geq&\nonumber
  2C^{(s)}(\rho)-\frac{1}{2},
\end{eqnarray}
where the first equality follows from Eq.~(\ref{eq:lowerbound-3}).
This completes the proof.
\end{proof}

By Eq.~(\ref{eq:lowerbound-2-1}), we further obtain that the Tsallis concentratable entanglements $\mathcal{T}_{\alpha}^{(s)}$ satisfis 
\begin{equation}\label{eq:lowerbound-4}
   \mathcal{T}_{\alpha'}^{(s)}(\rho) \geq \mathcal{T}_{\alpha}^{(s)}(\rho),
\end{equation}
for any orders $\alpha\geq\alpha'>0$.

Let \(Z = \{0,1\}^n\) be the set of all such bitstrings and \(Z_0(s)\) is the set of all bitstrings with 0's on all indices in \(s\). In Ref.~\cite{beckey2021computable}, Beckey $et$ $al.$ show that the the concentratable entanglements of a \(n\)-qubit pure state can be computed from the outcomes of the \(n\)-qubit parallelized SWAP test as
\begin{equation}
C_s(\ket{\psi}) = 1 - \sum_{z \in Z_0(s)} p(z),
\end{equation}
where \(p(z)\) is the probability of measuring the \(z\)-bitstring on the \(n\)-control qubits. Thus, through the \(n\)-qubit parallelized SWAP test, we can efficiently estimate a lower bound of the von Neumann concentratable entanglements and the R\'{e}nyi concentratable entanglement $\mathcal{R}_{\alpha}^{(s)}$ with $\alpha=2$ on real quantum devices (e.g., using quantum optical Fredkin gates~\cite{Milburn1989-PhysRevLett.62.2124}). 

In Ref.\cite{Beckey2023_PhysRevA.107.062425}, Beckey $et$ $al.$ demonstrated how to construct lower bounds on the concentratable entanglement of mixed states that can be estimated using only Bell-basis measurement data. By combining their method with Corollary~\ref{coro:vn-mixed-lowerbound}, we further show that a lower bound on the von Neumann concentratable entanglement and the R\'{e}nyi concentratable entanglement $\mathcal{R}_{\alpha}^{(s)}$ with $\alpha=2$ of a mixed quantum state $\rho$ can also be estimated from Bell-basis measurement data.

\section{Examples}\label{sec:examples}
In this section, we present several examples of von Neumann concentratable entanglements, Tsallis concentratable entanglements, and R\'{e}nyi concentratable entanglements for \(n\)-qubit GHZ and W states. We then investigate the unified-entropy concentratable entanglements in a four-partite star quantum network. 
In the following examples, we use the following four specific instances of the unified-entropy concentratable entanglements as benchmarks for our study: the von Neumann concentratable entanglements \(E^{(s)}\), the R\'{e}nyi concentratable entanglements \(\mathcal{R}_{\alpha}^{(s)}\) for \(\alpha=2\), the Tsallis concentratable entanglements \(\mathcal{T}_{\alpha}^{(s)}\) for \(\alpha=3\), and the concentratable entanglements \(C^{(s)}\).

\subsection{\texorpdfstring{$n$-qubit GHZ states and W states}{n-qubit GHZ states and W states}}
Under the framework of reversible LOCC, GHZ and W states remain maximally entangled within their respective classes. However, from the perspective of various pure-state entanglement measures, GHZ states are generally considered to have a higher degree of entanglement than W states. In contrast, W states exhibit greater robustness, as the remaining system can still retain entanglement even if some qubits are measured or lost. In this section, we focus on these two types of entangled states due to their significant applications in quantum computing~\cite{Horodecki2009}. 

Firstly, we will give some examples for von Neumann concentratable entanglements \(E^{(s)}\) for $n$-qubit GHZ states and W states. 

For \( n \geq 3 \), the \( n \)-qubit GHZ and W states, expressed in the computational basis, are given by  

\begin{equation}  
\ket{\text{GHZ}_n} = \frac{1}{\sqrt{2}} \left( \ket{0}^{\otimes n} + \ket{1}^{\otimes n} \right),  
\end{equation}  

\begin{equation}  
\ket{W_n} = \frac{1}{\sqrt{n}} \sum_{i=1}^{n} \ket{0 \dots 1_i \dots 0}.  
\end{equation}  

In Fig.~\ref{fig:GHZvsW-von-2}, we plot the difference \(\Delta = E^{(s)}(\ket{\text{GHZ}}) - E^{(s)}(\ket{W})\) for various subsystem sizes \(|s|\). 
Meanwhile, for \(s = [n]\), we have 
\[
E^{(s)}(\ket{\text{GHZ}}) = 1 - \frac{1}{2^{n-1}}
\] 
and  
\[
E^{(s)}(\ket{W}) = \frac{1}{2^n} \sum_{k=0}^{n} \binom{n}{k} h\left(\frac{k}{n}\right),
\]
where \(h\left(\frac{k}{n}\right) = -\frac{k}{n} \log \frac{k}{n} - \left(\frac{n-k}{n}\right) \log \left(\frac{n-k}{n}\right)\). As shown in Fig.~\ref{fig:GHZvsW-von-1}, both \(E^{(s)}(\ket{\text{GHZ}})\) and \(E^{(s)}(\ket{W})\) increase slowly and gradually approach 1 as \(n\) increases. 
The results show that for large \(n\), the GHZ and W states can be effectively distinguished using small subsystems, i.e., when \(|s|\) is small. In both cases, we observe that \(E^{(s)}(\ket{\text{GHZ}}) > E^{(s)}(\ket{W})\), indicating that \(E^{(s)}\) is an effective measure for distinguishing GHZ states from W states.

\begin{figure}
    \centering
    \includegraphics[width=0.95\linewidth]{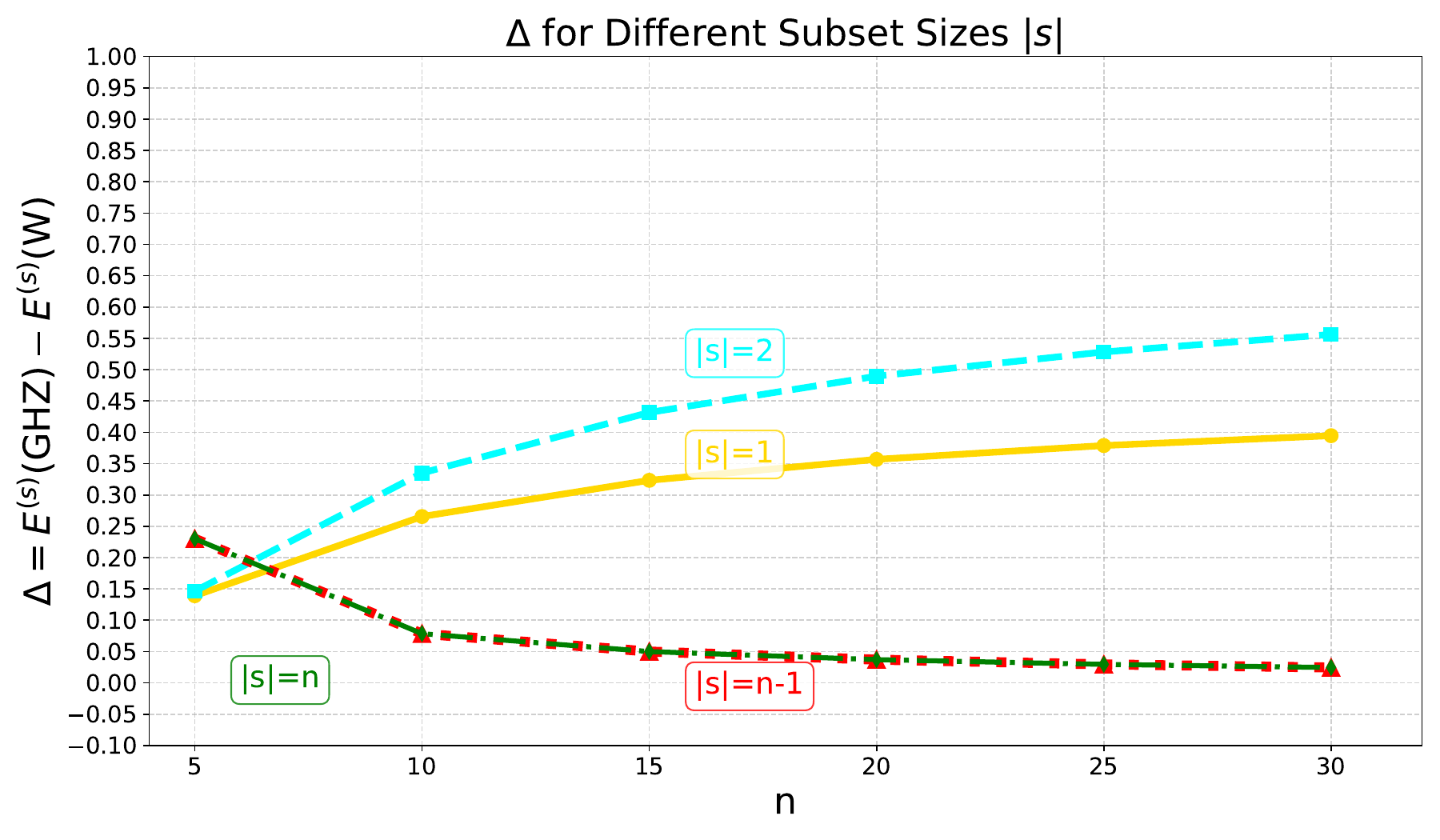}
    \caption{The difference \(\Delta=E^{(s)}(\ket{\text{GHZ}}) - E^{(s)}(\ket{W})\) for various subsystem sizes \(|s|\).}
    \label{fig:GHZvsW-von-2}
\end{figure}

\begin{figure}
    \centering
    \includegraphics[width=0.99\linewidth]{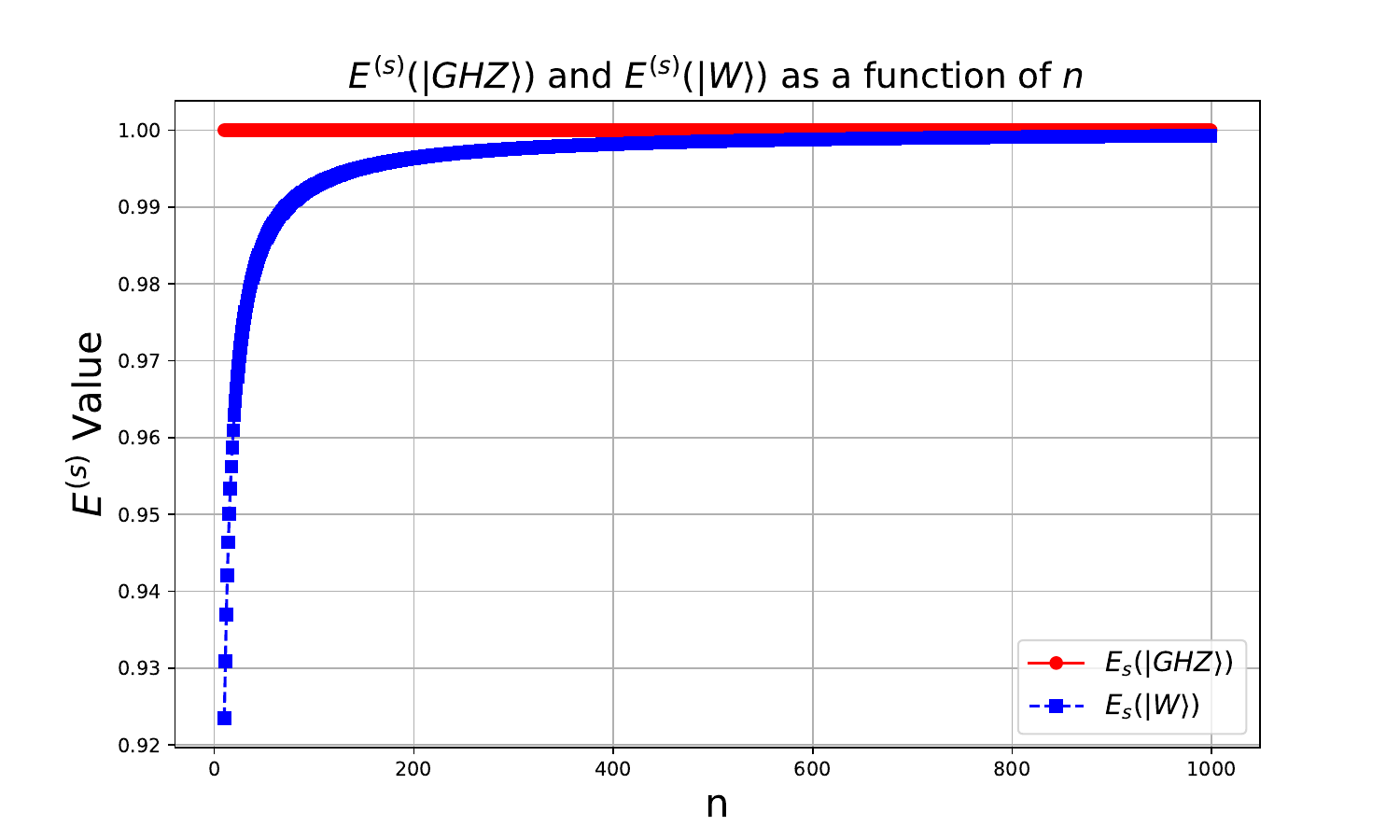}
    \caption{The von Neumann concentratable entanglements serves as an effective measure for distinguishing GHZ states from W states.}
    \label{fig:GHZvsW-von-1}
\end{figure}
\begin{figure}
    \centering
    \includegraphics[width=0.99\linewidth]{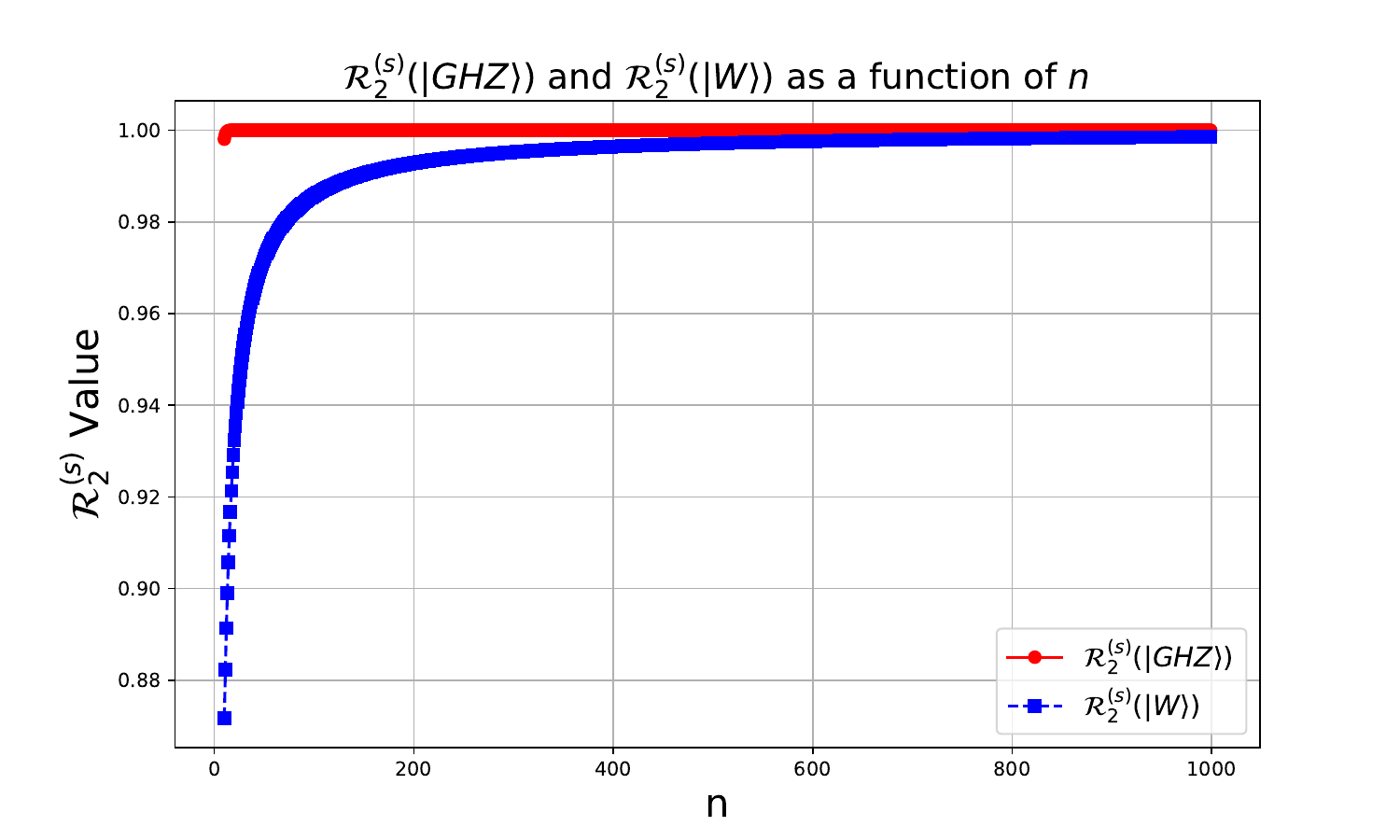}
    \caption{The R\'{e}nyi concentratable entanglements serves as an effective measure for distinguishing GHZ states from W states.}
    \label{fig:GHZvsW-renyi-1}
\end{figure}
\begin{figure}
    \centering
    \includegraphics[width=0.99\linewidth]{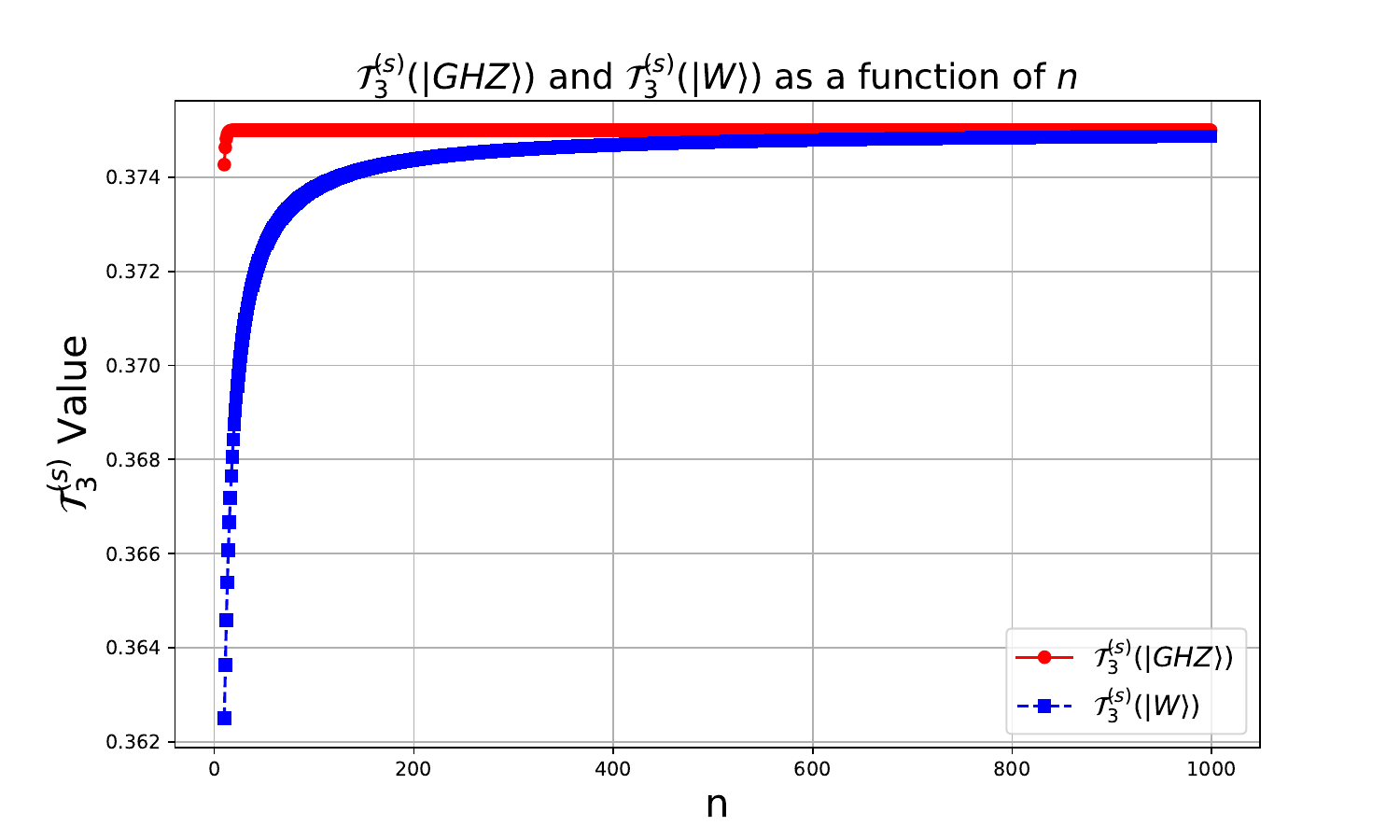}
    \caption{The Tsallis concentratable entanglements serves as an effective measure for distinguishing GHZ states from W states.}
    \label{fig:GHZvsW-tsallis_3}
\end{figure}
\begin{figure}
    \centering
    \includegraphics[width=0.99\linewidth]{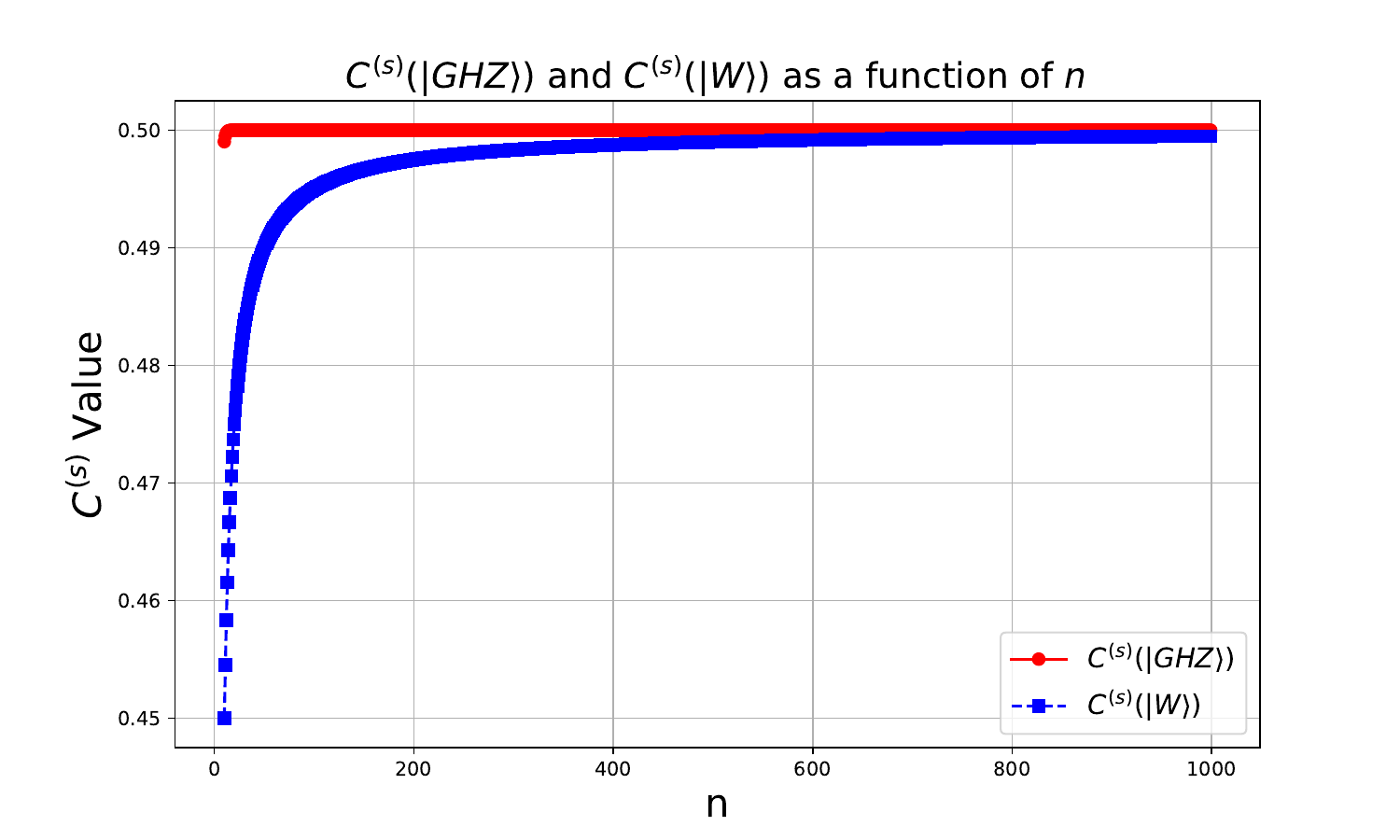}
    \caption{The concentratable entanglements serves as an effective measure for distinguishing GHZ states from W states.}
    \label{fig:GHZvsW-linear}
\end{figure}
Similarly, we also compare the behaviors of the R\'{e}nyi concentratable entanglements (at \(\alpha=2\)), the Tsallis concentratable entanglements (at \(\alpha=3\)), and the concentratable entanglements as functions of \(n\) for GHZ and W states.  
The explicit expressions for these quantities are given by
\begin{widetext}
\[
\mathcal{R}_2^{(s)}(\ket{\text{GHZ}}) = 1 - \frac{1}{2^{n-1}}, \quad
\mathcal{R}_2^{(s)}(\ket{W}) = -\frac{1}{2^n} \sum_{k=0}^{n} \binom{n}{k} \log_2\left(\left(\frac{k}{n}\right)^2 + \left(\frac{n-k}{n}\right)^2\right),
\]
\[
\mathcal{T}_3^{(s)}(\ket{\text{GHZ}}) = \frac{3}{8} \frac{2^n-2}{2^n}, \quad
\mathcal{T}_3^{(s)}(\ket{W}) = \frac{3n-1}{8n},
\]
\[
C^{(s)}(\ket{\text{GHZ}}) = \frac{1}{2} \frac{2^n-2}{2^n}, \quad
C^{(s)}(\ket{W}) = \frac{n-1}{2n}.
\]  
\end{widetext}
By analyzing the trends shown in Figs.~\ref{fig:GHZvsW-renyi-1}--\ref{fig:GHZvsW-linear}, we were surprised to find that all four multipartite entanglement measures—including the von Neumann concentratable entanglements—not only effectively distinguish GHZ states from W states, but also exhibit remarkably similar behavior as \(n\) increases.
\subsection{A four-partite star quantum network}\label{sec:starnetwork}
Consider following quantum state 
\begin{equation}
    \ket{\psi}=
(\cos{\theta}\ket{00}+\sin{\theta}\ket{11})^{\otimes 3}.
\end{equation}
It can be seen that this quantum state is composed of the tensor product of three EPR-like states. If the first register of each EPR-like state is regarded as a subsystem of the first system, then the quantum state \(\ket{\psi}\) can be viewed as a four-partite pure state residing in the \(8 \otimes 2 \otimes 2 \otimes 2\)-dimensional Hilbert space \(\mathcal{H}_A\otimes\mathcal{H}_B\otimes\mathcal{H}_C\otimes\mathcal{H}_D\). Thus, we have

\begin{eqnarray}
\ket{\psi}
&=&
a^3 \ket{0000} + a^2 b \ket{1001} + a^2 b \ket{2010} + a b^2 \ket{3011} \nonumber 
\\\nonumber&& 
+ a^2 b \ket{4100} + a b^2 \ket{5101} + a b^2 \ket{6110} + b^3 \ket{7111},
\end{eqnarray}
with \(a=\cos{\theta}, b=\sin{\theta}\). For \(s = [n]\), the unified-entropy concentratable entanglements \(E_{\alpha,\beta}^{(s)}(\ket{\psi})\) can be expressed as  
\[
E_{\alpha,\beta}^{(s)}(\ket{\psi}) = \frac{1}{8} \left(3S_{\alpha,\beta}(\psi_{A}) + 3S_{\alpha,\beta}(\psi_{B}) + S_{\alpha,\beta}(\psi_{AB}) \right),
\]  
where \(\psi_{A}\), \(\psi_{B}\), and \(\psi_{AB}\) denote the reduced density matrices of the subsystems corresponding to \(\mathcal{H}_A\), \(\mathcal{H}_B\), and \(\mathcal{H}_{AB}\), respectively. 

We conduct a comparative study of entanglement behavior in the four-partite star quantum network using four entanglement measures: the von Neumann concentratable entanglements \(E^{(s)}\), the R\'{e}nyi concentratable entanglements \(\mathcal{R}_{2}^{(s)}\), the Tsallis concentratable entanglements \(\mathcal{T}_3^{(s)}\), and the linear concentratable entanglements \(C^{(s)}\). As shown in Fig.~\ref{fig:psi_vonNuemann}, compared to \(\mathcal{T}_3^{(s)}\) and \(C^{(s)}\), both \(E^{(s)}\) and \(\mathcal{R}_{2}^{(s)}\) exhibit superior sensitivity in distinguishing the entanglement variations across different values of the parameter \(\theta\) in the state \(\ket{\psi(\theta)}\).

\begin{figure}
\centering
    \includegraphics[width=0.94\linewidth]{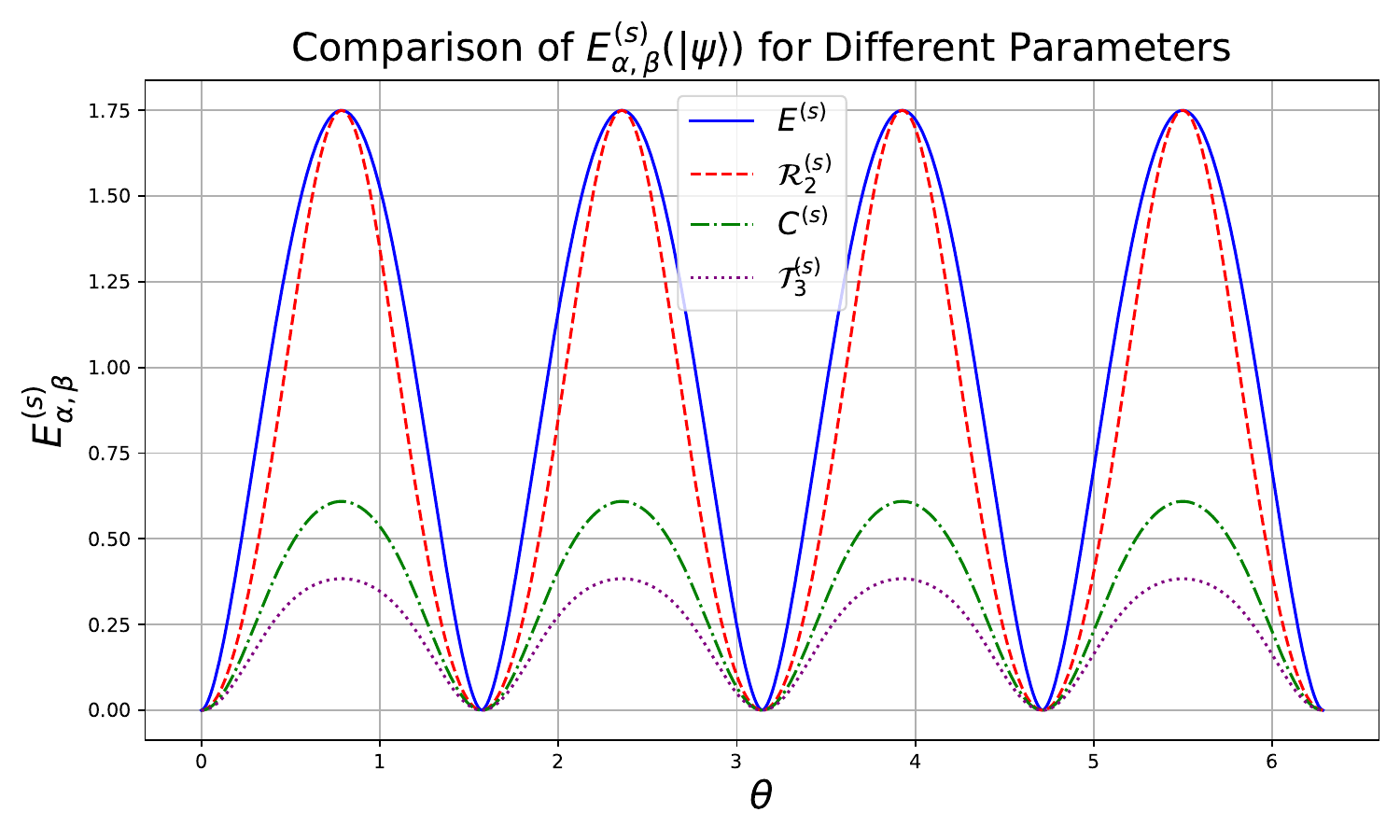}
     \caption{Comparison of unified-entropy concentratable entanglements for von Nuemann concentratable entanglemnt \(E^{(s)}\), R\'{e}nyi concentratable entanglements \(\mathcal{R}_{2}^{(s)}\), Tsallis concentratable entanglements \(\mathcal{T}_3^{(s)}\), and concentratable entanglements \(C^{(s)}\).}
    \label{fig:psi_vonNuemann}
\end{figure}
\subsection{Four-qubit Dicke states}\label{Dicke}
Dicke states provide a rich variety of structurally complex states among many particles and hold great promise for a wide range of applications in quantum information~\cite{Chiuri2012-PhysRevLett.109.173604,Lucke2014-PhysRevLett.112.155304,Adam2021-PhysRevA.104.022426,Chen2020-PhysRevA.101.012308}. In the following, we conduct a comparative study of the four measures—the von Neumann concentratable entanglement $E^{(s)}$, the R\'{e}nyi concentratable entanglement $\mathcal{R}_{2}^{(s)}$, the Tsallis concentratable entanglement $\mathcal{T}_{3}^{(s)}$, and the linear concentratable entanglement $C^{(s)}$—with respect to their ability to distinguish four-qubit Dicke states.

For an $n$-qubit system, the Dicke state with $k$ excitations is defined as the 
completely symmetric superposition of all computational basis states with exactly 
$k$ qubits in the excited state $\ket{1}$ and the remaining $n-k$ qubits in the ground 
state $\ket{0}$. Explicitly, it can be written as~\cite{Chiuri2012-PhysRevLett.109.173604}
\begin{equation}\label{Dickestates}
    \ket{D^{(n)}_{k}}
    = \binom{n}{k}^{-1/2} 
    \sum_{\pi \in S_n} U_{\pi} \Big( \ket{1}^{\otimes k} \otimes \ket{0}^{\otimes (n-k)} \Big),
\end{equation}
where $S_n$ denotes the symmetric group of degree $n$ and $U_{\pi}$ is the permutation 
operator corresponding to $\pi \in S_n$. The normalization factor 
$\binom{n}{k}^{-1/2}$ ensures that $\ket{D^{(n)}_{k}}$ is a unit vector.
For the case of four qubits $(n=4)$, the Dicke states 
$\{\ket{D^{(4)}_k}\}_{k=0}^4$ take the following explicit forms:
\begin{align}
    \ket{D^{(4)}_0} &= \ket{0000}, \\[6pt]
    \ket{D^{(4)}_1} &= \tfrac{1}{2}\big( \ket{1000} + \ket{0100} + \ket{0010} + \ket{0001} \big), \\[6pt]
    \ket{D^{(4)}_2} &= \tfrac{1}{\sqrt{6}}\big( \ket{1100} + \ket{1010} + \ket{1001} 
                                   + \ket{0110} + \ket{0101} + \ket{0011} \big), \\[6pt]
    \ket{D^{(4)}_3} &= \tfrac{1}{2}\big( \ket{1110} + \ket{1101} + \ket{1011} + \ket{0111} \big), \\[6pt]
    \ket{D^{(4)}_4} &= \ket{1111}.
\end{align}
These states correspond respectively to zero, one, two, three, and four excitations 
in the computational basis. For \(s = [n]\), the unified-entropy concentratable entanglements \(E_{\alpha,\beta}^{(s)}(\ket{D^{(4)}_k})\) can be expressed as  
\[
E_{\alpha,\beta}^{(s)}(\ket{D^{(4)}_k}) = \frac{1}{8} \left(4S_{\alpha,\beta}(\psi_{A}) + 3S_{\alpha,\beta}(\psi_{AB}) \right),
\]  
where \(\psi_{A}\) and \(\psi_{AB}\) denote the reduced density matrices of the subsystems corresponding to \(\mathcal{H}_A\) and \(\mathcal{H}_{AB}\), respectively. 
As shown in Fig.~\ref{fig:Dicke-states}, compared with $\mathcal{T}_3^{(s)}$ and $C^{(s)}$, the values of $E^{(s)}$ and $\mathcal{R}_{2}^{(s)}$ are larger and provide better distinction among four-qubit Dicke states with different $k$.   

From both Fig.~\ref{fig:psi_vonNuemann} and Fig.~\ref{fig:Dicke-states}, the ordering relations among the four multipartite entanglement measures $E^{(s)}$, $\mathcal{R}_{2}^{(s)}$, $\mathcal{T}_3^{(s)}$, and $C^{(s)}$ can be clearly observed, which further provides a concrete illustration of Theorem~\ref{the:lowerbound}.

\begin{figure}
\centering
    \includegraphics[width=0.94\linewidth]{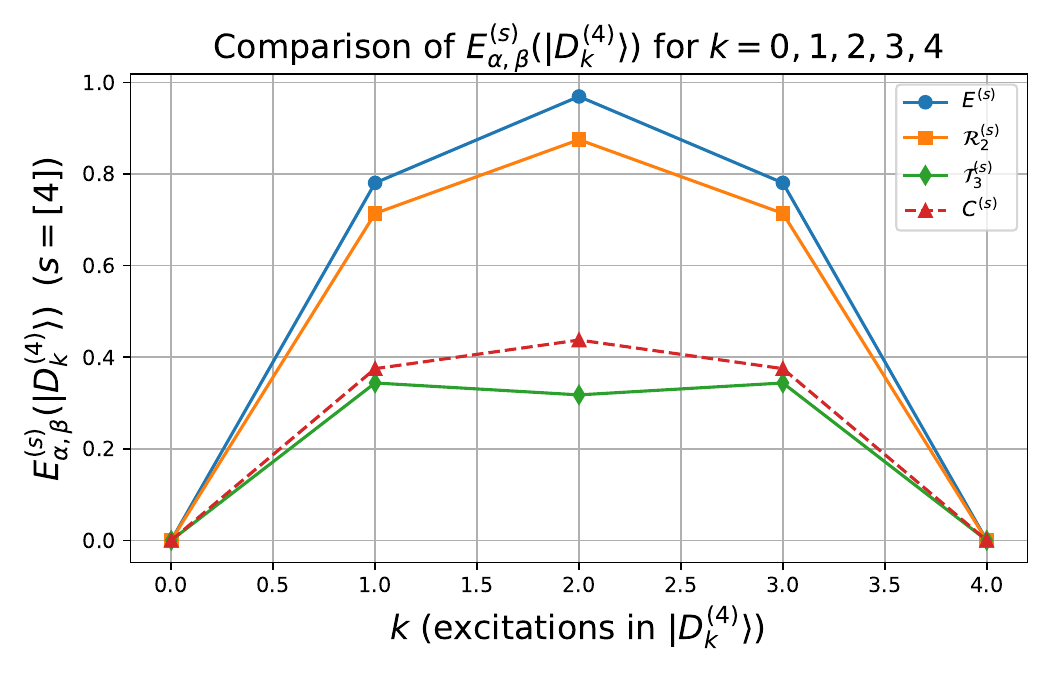}
     \caption{Comparison of different unified-entropy concentratable entanglements for four-qubit Dicke states.}
    \label{fig:Dicke-states}
\end{figure}

\section{Conclusion}\label{sec:conclusion}
In this paper, we introduced a family of two-parameter multipartite entanglement measures, termed unified-entropy concentratable entanglements. This class of measures not only provides a unified framework for detecting, quantifying, and characterizing multipartite entanglement, but also encompasses a wide range of both bipartite and multipartite entanglement measures. Specifically, it includes well-known bipartite measures such as the entanglement of formation, R\'{e}nyi entanglement, and concurrence, as well as multipartite measures such as the original concentratable entanglements and the entanglement measures proposed in Ref.~\cite{Carvalho2004-PhysRevLett.93.230501}. We prove that for any quantum state and for all \(\alpha>0, \beta\geq0\), the unified-entropy concentratable entanglements
$E^{(s)}_{\alpha,\beta}$ is monotonically non-increasing under multipartite separable operations. A direct corollary of this result is that $E^{(s)}_{\alpha,\beta}$ is also monotonically non-increasing under LOCC operations. Furthermore, we establish several desirable properties of this entanglement measure, including subadditivity and continuity.

In addition, we specifically focus on the von Neumann concentratable entanglements $E^{(s)}$. We found that $E^{(s)}$ not only constitutes a well-defined family of entanglement measures, but also exhibits several advantages over the concentratable entanglements $C^{(s)}$. In particular, as shown in Eq.~(\ref{eq:vn-re-add-1}) and Eq.~(\ref{eq:vn-ren-add-2}), $E^{(s)}$ satisfies additivity for biseparable pure states, whereas $C^{(s)}$ only satisfies a form of "pseudo-additivity" (as shown in Eq.~(\ref{eq:qsi-additive})). Moreover, we establish a series of ordering relations among the unified-entropy concentratable entanglements. For example, we prove that
\begin{equation}
E^{(s)}(\rho)>\mathcal{R}_{2}^{(s)}(\rho) > C^{(s)}(\rho) > \mathcal{T}_3^{(s)}(\rho),
\end{equation}
holds for any quantum state \(\rho\).
Since $C^{(s)}$ can be efficiently estimated from Bell-basis measurement data on near-term quantum devices, this result provides a practical means to estimate lower bounds for both $E^{(s)}$ and $\mathcal{R}_{2}^{(s)}$, as well as upper bounds for the Tsallis concentratable entanglements such as $\mathcal{T}_3^{(s)}$.


Furthermore, taking the von Neumann concentratable entanglements $E^{(s)}$, the R\'{e}nyi concentratable entanglements $\mathcal{R}_{2}^{(s)}$, the Tsallis concentratable entanglements $\mathcal{T}_{3}^{(s)}$, and the concentratable entanglements $C^{(s)}$ as benchmarks, we investigated their ability to distinguish between $n$-qubit GHZ and W states. Our results indicate that all four families of measures exhibit comparable performance in distinguishing these two classes of multipartite entangled states. In addition, we explored their performance in characterizing entanglement variations in a four-partite star quantum network $\ket{\psi(\theta)}$ and four-qubit Dicke states \(\ket{D^{(4)}_{k}}\). Compared to $\mathcal{T}_3^{(s)}$ and $C^{(s)}$, both $E^{(s)}$ and $\mathcal{R}_{2}^{(s)}$ demonstrate superior sensitivity in detecting entanglement changes as the parameter $\theta$ varies in the quantum state $\ket{\psi(\theta)}$. 
This further demonstrates that the multi-parameter entanglement measure $E^{(s)}_{\alpha,\beta}$ possesses potential advantages over the original concentratable entanglements $C^{(s)}$, particularly in tasks such as entanglement detection.

These findings not only expand the landscape of multipartite entanglement measures but also underscore the utility of characterizing multipartite entanglement through bipartite quantities. We hope that our results provide valuable tools for future studies on the structure and quantification of multipartite entanglement. In particular, promising directions for future work include the development of efficient quantum algorithms for estimating unified-entropy concentratable entanglements with arbitrary parameters, as well as deriving analytical formulas for special classes of quantum states, such as graph states~\cite{Cullen2022-PhysRevA.106.042411}.
\begin{acknowledgments}
We sincerely thank the two anonymous reviewers for their valuable feedback, which has contributed to improving the quality of our paper.
We also thank Yu Guo, Xue Yang, Zhengjun Xi and Jianwei Xu for helpful discussions.
This work is supported by the National Natural Science Foundation of China (Grant No. 62001274, Grant No. 12271325, Grant No. 62171266, Grant No. 12301590) and the Natural Science Foundation of the Higher Education Institutions of Jiangsu Province of China (No. 1020240975) and the Fundamental Research Funds for the Central Universities (GK202501008).
\end{acknowledgments}
%
\appendix
\section{Schur concavity of unified entropy}\label{app:schur-concave}
We now demonstrate that the unified entropy is Schur-concave for parameters $\alpha > 0$ and $\beta > 0$.
Let $\rho = \sum_i \lambda_i \ketbra{\psi_i}{\psi_i}$ be a quantum state, and define
$$
h(\lambda) := \frac{1}{(1-\alpha)\beta} \left[ \left( \sum_i \lambda_i^\alpha \right)^\beta - 1 \right].
$$
To establish the Schur-concavity of the unified entropy, it is sufficient to verify that for all $i,j$~\cite{bhatia2013matrix},
\begin{equation}
\Gamma_{ij} :=
(\lambda_i - \lambda_j) \left( \frac{\partial h}{\partial \lambda_i} - \frac{\partial h}{\partial \lambda_j} \right) \leq 0.
\end{equation}
Computing the partial derivatives, we obtain
\begin{equation}
\frac{\partial h}{\partial \lambda_i}=
\frac{\alpha}{1-\alpha}s^{\beta-1} \lambda_i^{\alpha-1},
\end{equation}
where $s = \sum_i \lambda_i^\alpha$.
Substituting into the expression for $\Gamma_{ij}$, we find
\begin{equation}
\Gamma_{ij}=
(\lambda_i - \lambda_j) \frac{\alpha}{1-\alpha} s^{\beta-1} \bigl(\lambda_i^{\alpha-1} - \lambda_j^{\alpha-1}\bigr).
\end{equation}
Since $s^{\beta-1}>0$, the sign of $\Gamma_{ij}$ is determined by
\begin{equation}
(\lambda_i - \lambda_j) \bigl(\lambda_i^{\alpha-1}-\lambda_j^{\alpha-1}\bigr)\frac{\alpha}{1-\alpha}.
\end{equation}
For $\alpha > 1$, we have $\tfrac{\alpha}{1-\alpha} < 0$, while $(\lambda_i - \lambda_j)(\lambda_i^{\alpha-1}-\lambda_j^{\alpha-1}) \geq 0$, which ensures $\Gamma_{ij} \leq 0$.
For $0 < \alpha < 1$, we have $\tfrac{\alpha}{1-\alpha} > 0$, while $(\lambda_i - \lambda_j)(\lambda_i^{\alpha-1}-\lambda_j^{\alpha-1}) \leq 0$, which again gives $\Gamma_{ij} \leq 0$.
For $\alpha = 1$ or $\beta=0$, $h(\lambda)$ reduces to the Shannon entropy or R\'{e}nyi entropy, both of which are well known to be Schur-concave.
Therefore, the unified entropy is Schur-concave for all $\alpha > 0$ and $\beta \geq 0$.

\section{Proofs of Eq.~(\ref{eq:Tsallis-order})}
\label{app:proof-Tsallis-order}
Let \(\{\lambda_i\}\) be the eigenvalues of the quantum state \(\psi_\chi\), and define the function 
\(f(\alpha) := \mathrm{Tr}(\psi_\chi^\alpha) = \sum_i \lambda_i^\alpha\) for \(\alpha > 0\). 
It is straightforward to compute
\begin{equation}
f'(\alpha) = \sum_i \lambda_i^\alpha \ln \lambda_i \leq 0, 
\qquad
f''(\alpha) = \sum_i \lambda_i^\alpha (\ln \lambda_i)^2 \geq 0.
\end{equation}
Therefore, \(f(\alpha)\) is monotonically nonincreasing and convex, with \(f(1) = 1\).
For any \(\alpha>0, \alpha \neq 1, \beta\geq1\), we define the function
\begin{equation}
g_\beta(\alpha) := \beta S_{\alpha,\beta}(\psi_{\chi}) = \frac{[f(\alpha)]^\beta-1}{1-\alpha}.
\end{equation}
To complete the proof of Eq.~(\ref{eq:Tsallis-order}), it suffices to show that \(g_\beta(\alpha)\) is monotonically nonincreasing. 
Differentiating yields
\begin{equation}
g_\beta'(\alpha) = \frac{(1-\alpha)h_\beta'(\alpha) + \bigl([f(\alpha)]^\beta-1\bigr)}{(1-\alpha)^2},
\end{equation}
where \(h_\beta'(\alpha)=\frac{d}{d\alpha}[f(\alpha)]^\beta=\beta[f(\alpha)]^{\beta-1}f'(\alpha)\).
Since \([f(\alpha)]^\beta\) is convex (this follows directly from the fact that its second derivative is nonnegative), the first-order condition gives~\cite{boyd2004convex}
\begin{equation}
[f(1)]^\beta \geq [f(\alpha)]^\beta + h_\beta'(\alpha)(1-\alpha).
\end{equation}
This implies
\begin{equation}
1\geq[f(\alpha)]^\beta+h_\beta'(\alpha)(1-\alpha)
\;\;\;\Longrightarrow\;\;\;
(1-\alpha)h_\beta'(\alpha) + \bigl([f(\alpha)]^\beta-1\bigr) \leq 0,
\end{equation}
and hence \(g_\beta'(\alpha) \leq 0\).
\sloppy
\bibliographystyle{unsrt}
\bibliography{main}
\fussy
\end{document}